\definecolor{mygray}{gray}{.9}
\newcommand{\ket}[1]{| #1 \rangle} % |u>
\newcommand{\bb}{\boldsymbol}
\def \d {\mathrm{d}}
\def \e {\mathrm{e}}
\def \i {\mathrm{i}}
\newcounter{parentalgorithm}
\newtheorem{theorem}{Theorem}[section]
\newtheorem{lemma}{Lemma}[section]
\newtheorem{definition}{Definition}[section]
\newtheorem{example}{\bf Example}[section]
\theoremstyle{remark}
\newtheorem{remark}{\bf Remark}[section]
\numberwithin{equation}{section}
\begin{document}

\title{Quantum simulation of partial differential equations via Schr\"odingerisation: technical details}
\author[1,2]{Shi Jin \thanks{shijin-m@sjtu.edu.cn}}
\author[1, 2, 3]{Nana Liu\thanks{nana.liu@quantumlah.org}}
\author[1]{Yue Yu\footnote{Corresponding author.}\thanks{terenceyuyue@sjtu.edu.cn}}
\affil[1]{School of Mathematical Sciences, Institute of Natural Sciences, MOE-LSC, Shanghai Jiao Tong University, Shanghai, 200240, P. R. China.}
%\affil[2]{Institute of Natural Sciences, Shanghai Jiao Tong University, Shanghai 200240, China.}
%\affil[3]{Ministry of Education, Key Laboratory in Scientific and Engineering Computing, Shanghai Jiao Tong University,
%Shanghai 200240, China}
\affil[2]{Shanghai Artificial Intelligence Laboratory, Shanghai, China.}
\affil[3]{University of Michigan-Shanghai Jiao Tong University Joint Institute, Shanghai 200240, China.}

%\date{}

\maketitle

\begin{abstract}
  We study a new method-- called {\it Schr\"odingerisation} introduced in \cite{Schrshort} -- for solving general linear partial differential equations with quantum simulation. This method converts linear partial differential equations into a `Schr\"odingerised' or Hamiltonian system, using a new and simple transformation called the {\it warped phase transformation}. Here we provide more in-depth technical discussions and expand on this approach in a more detailed and pedagogical way. We apply this to more examples of partial differential equations, including heat, convection, Fokker-Planck, linear Boltzmann and Black-Scholes equations. This approach can also be extended to Schr\"odingerise general linear partial differential equations, including the Vlasov-Fokker-Planck equation and the Liouville representation equation for nonlinear ordinary differential equations.
  %We also show this new approach finds a variety of applications in time-dependent boundary value problems.
\end{abstract}

\textbf{Keywords}: Quantum simulation; Linear partial differential equation; Warped phase transformation, Schr\"odingerised system

\tableofcontents

\section{Introduction}

Quantum algorithms for solving partial differential equations (PDEs) have received extensive attention in recent years \cite{Cao2013Poisson,Berry-2014,qFEM-2016,Costa2019Wave,Engel2019qVlasov,Childs-Liu-2020,Linden2020heat,
Childs2021high,JinLiu2022nonlinear,GJL2022QuantumUQ,JLY2022multiscale} thanks to the development of quantum algorithms with exponential acceleration advantages in linear algebraic problems \cite{HHL2009,Childs2017QLSA,Costa2021QLSA,Berry-2014,BerryChilds2017ODE,Childs-Liu-2020,Subasi2019AQC}. For time-dependent PDEs, one usually discretises the spatial variables to get a system of ordinary differential equations (ODEs), which in turn is solved by quantum ODE solvers \cite{Berry-2014,BerryChilds2017ODE,Childs-Liu-2020}. In particular, when the resulting ODE is also a Hamiltonian system, quantum simulation methods can be performed. In general, quantum simulations have less time complexity than quantum ODE solvers or other quantum linear algebra solvers (e.g., the quantum difference methods \cite{Berry-2014,JLY2022multiscale}), and thus the design for quantum simulation algorithms for solving linear PDEs is important for a wide range of applications. A very recent proposal appeared based on block-encoding \cite{An2022blockEncodingODE}.

This paper presents more in-depth technical details for a new protocol that transforms a general linear PDE into a quantum Hamiltonian system. We call this new method the {\it Schr\"odingerisation method}, introduced in our short companion paper \cite{Schrshort}. From the simplest example of the heat equation, for instance, a corresponding set of Schr\"odinger's equations can be derived. We present the heat equation example in detail, as a warm-up example to familiarise with the technique, in Section~\ref{sect:Schrodingerisation}. We also present the example of the convection equation.

This method is inspired by a recently developed transformation given in \cite{GJL2022QuantumUQ}, though originally for a completely different motivation. Here we introduce an auxiliary variable and construct transformations~-~referred to as the {\it warped phase transformation}~-~that converts the original equation into a new equation that has the structure of the Schr\"odinger operator. Then it can subsequently be simulated by quantum Hamiltonian simulation. Since the method introduces only a one-dimensional auxiliary variable, the additional computational cost is small.

When discretising a general linear PDE by spectral methods or other numerical methods, a Hamiltonian system is not necessarily obtained. There are two reasons for this, one is that the coefficient matrix arising from each term of the equation is not always preceded by the imaginary unit $\i = \sqrt{-1}$, and the other is that despite the imaginary unit, the coefficient matrix is usually not symmetric, especially for problems with variable coefficients, or for the discretisation schemes that are not centered (with a symmetric stencil).

We observe that applying the warped phase transformation works for equations with constant coefficients, and it also works for some variable coefficient problems. However, for general variable coefficient problems, the direct use of the method does not necessarily achieve the goal of getting a Hamiltonian system. For example, see the Vlasov-Fokker-Planck equation discussed in Subsect.~\ref{subsec:VFP}.

For this reason, we further design a universal algorithm for the system of linear ODEs based on warped transformation,  where the system of ODEs is obtained after spatial discretisations of any linear PDEs. In Section~\ref{sec:general} we introduce an algorithm for this general linear system of ODEs. This idea works ODEs resulting from spatial discretizations for all constant coefficient and even some variable coefficients PDEs.  We also discuss some alternative methods to our Schr\"odignerisation approach.

We remark that this approach will also find a variety of applications in solving problems with a (time-dependent or independent) source term, and boundary value problems (with the time-dependent or independent boundary conditions).  In fact, when constructing Hamiltonian systems for general boundary value problems (for example with the Dirichlet boundary condition), the inhomogeneous right-hand side in the resulting ODE system may arise after spatial discretization of the boundary conditions for the PDEs.  We propose a simple augmentation technique (see Remark~\ref{rem:augment}) to resolve this issue, which together with the warped phase transformation gives the Schr\"odingerisation approach for time-dependent boundary value problems.

%Don't need to discuss here in the into: There is a different protocol~-~the unitarisation method~-~presented for the Black-Scholes equation in \cite{Javier2022optionprice}. We extend its protocol to general cases and compare it with our  Schr\"odingerisation approach. In comparison, our method retains the structure of the original problem without the additional approximation of the operators involved in the unitary dilation operators. This means our protocol is easier to implement. {\color{red}However, our approach has additional computational cost due to the CFL condition when the arccos problem can be easily resolved for the unitarisation method as discussed in Sect.~\ref{sec:general}}.
In Section~\ref{sec:applications}, we show that our method is applicable to a variety of important linear partial differential equations, including the linear heat, convection,  (Vlasov-) Fokker-Planck, linear Boltzmann and Black-Scholes equations. For nonlinear problems, we give an application via the linear representation~-~the Liouville representation~-~approach for nonlinear dynamical systems. It is worth pointing out that the Liouville representation can be symmetrised using the Koopman-von Neumann representation \cite{Joseph2020KvN,Dobin2021plasma,JinLiu2022nonlinear,JinLiuYu2022nonlinear}, but it involves the square root of the Dirac delta-function $\delta (x)$, which is not well-defined  mathematically, even in the weak sense. Thus one needs to be more careful in interpreting its solution and the consequent numerical convergence in a suitable solution space \cite{JinLiuYu2022nonlinear}. Our Schr\"odingerisation approach allows a direct treatment of the Liouville equation, and hence does not have difficulties in this regard.

%Overall we sort of `Schr\"odingerise' the linear PDEs so they fit quantum computer naturally.

\section{Schr\"odingerisation of the heat and convection equations} \label{sect:Schrodingerisation}

\subsection{Schr\"odingerisation of the heat equation}

This section demonstrates how to transform a linear heat equation into Schr\"odinger-type PDEs.

\subsubsection{The reformulation of the heat equation} \label{subsect:idea}

Consider the following initial value problem of the linear heat equation
\begin{equation*} \label{heateq0}
\begin{cases}
\partial_tu - \Delta u = 0, \\
u(0,x) = u_0(x),
\end{cases}
\end{equation*}
where $u = u(t,x)$, $x = (x_1,x_2,\cdots,x_d) \in \mathbb{R}^d$ is the position and $t\ge 0$.
%Without loss of generality, we assume $x_i \in [-1,1]$ for $1\le i \le d$ and impose the periodic boundary conditions throughout this article.

Introduce an auxiliary variable $p>0$ and define
\[w(t,x,p) = \e^{-p} u(t,x), \qquad p>0.\]
A simple calculation shows that $w$ solves
\begin{equation}\label{heatreformulation}
\partial_t w + \partial_p \Delta_x w = 0,  \qquad p>0.\\
\end{equation}
From $w$ one can recover $u$ via
\begin{equation}\label{integration}
  u(t,x)= \int_0^\infty w(t,x,p)\d p = \int_{-\infty}^\infty \chi(p) w(t,x,p) \d p,
  \end{equation}
where $\chi(p) = 1$ for $p>0$ and $\chi(p) = 0$ for $p<0$, or, since $u(t,x)=\e^p w(t,x,p)$ for all $p>0$, one can simply choose any $p_*>0$ and let
\begin{equation} \label{point}
  u(t,x) = \e^{p_*} w(t,x, p_*).
\end{equation}

Applying the Fourier transform on $x$ (here we assume  $x$ is defined in a periodic domain) and let $\hat{w}(t,\xi,p)$,  with $\xi = [\xi_1,\cdots,\xi_d]^T$ being the Fourier modes, be the  corresponding Fourier transform of $w$, one gets a convection equation
\[\partial_t \hat{w}- |\xi|^2 \partial_p \hat{w} = 0,\]
where $|\xi|^2 = \xi_1^2 + \cdots + \xi_d^2$. Clearly the solution $\hat{w}$ moves from the right to the left, so no boundary condition is needed at $p = 0$, while the initial data of $w$ is given by
\[
w(0,x,p)= \e^{-p} u_0(x), \qquad p>0.
\]

Moreover, if we extend $w$ to $p<0$, then the solution does not impact the region $p>0$ for $w$. For this reason, we symmetrically extend the initial data of $w$ to $p<0$ but keep Eq.~\eqref{heatreformulation}:
\begin{equation}\label{heatreformulationextend}
\begin{cases}
\partial_t w +  \Delta_x \partial_p w= 0, \qquad p \in (-\infty, \infty), \\
w(0,x,p) = \e^{-|p|} u_0(x).
\end{cases}
\end{equation}
This equation for $w$ will be called {\it phase space heat equation}. The solution  obviously coincides with the solution of \eqref{heatreformulation} when $p>0$. Due to the exponential decay of $\e^{-|p|}$ one can (computationally) impose the periodic boundary condition $w(t,x,p=-L) = w(t,x,p = L)$ $( = 0)$ along the $p$-direction for some $L>0$ sufficiently large. Then the Fourier transform on
$p$ gives
\begin{equation}\label{heat-Schro}
  \partial_t\tilde{w} -\i \eta \Delta\tilde{w}=0 \qquad \mbox{or} \qquad \i \partial_t\tilde{w} = - \eta \Delta\tilde{w},
\end{equation}
where $\tilde w(t,x,\eta)$, $\eta\in \mathbb{R}$, is the Fourier transform of $w$ in $p$.  Equation \eqref{heat-Schro} is clearly the Schr\"odinger equation, for every $\eta$!

\begin{remark}\label{rem:idea}
 One can justify the validity of the set up \eqref{heatreformulationextend}~--~which is more convenient than the half space problem for $w$ if one uses, for example spectral methods in $p$ space~--~in another way. Applying the Fourier transform on $x$ to \eqref{heatreformulationextend}, one gets
 \[\begin{cases}
\partial_t \hat{w}_t(t,\xi, p) - |\xi|^2 \partial_p \hat{w}(t,\xi, p) = 0, \qquad p \in (-\infty, \infty), \\
\hat{w}(0, \xi, p) = \e^{-|p|} \hat{u}_0(\xi).
\end{cases}\]
Using the method of characteristics, the analytic solution is given by
\begin{equation}\label{what-eqn}
\hat{w}(t,\xi, p) = \hat{w}(0, p+|\xi|^2 t) = \e^{-|p + |\xi|^2 t|} \hat{u}_0(\xi).
\end{equation}

 If $p>0$, then
\[\hat{w}(t,\xi,p) = \e^{-|p + |\xi|^2 t|} \hat{u}_0 = \e^{-p} \Big( \e^{-|\xi|^2 t} \hat{u}_0(\xi) \Big).\]
  The inverse transform implies
 \[w(t,p) = \e^{-p} \mathcal{F}^{-1} \Big( \e^{-|\xi|^2 t} \hat{u}_0(\xi) \Big) = \e^{-p} u(t,x),\]
  where $\mathcal{F}$ represents the Fourier transform. This  is exactly the  solution of \eqref {heatreformulation}.

From \eqref{what-eqn}, one sees that $|\hat{w}(t,\xi,p)|\le |\hat{u}_0(\xi)|$. Note that if $u_0(x) \in C^k$, then $\hat{u}_0(\xi) \sim \mathcal{O}(1/|\xi|^k)$. Therefore if
$u_0(x)$ is sufficiently smooth, $\hat{w}$ decays very fast in $\xi$, which means only those $\xi$ such that $|\xi|=\mathcal{O}(1)$
are important, hence $\hat{w}$ will move to the left with $\mathcal{O}(1)$ speed. Thus for $T=\mathcal{O}(1)$, $|L|=\mathcal{O}(1)$ is sufficient for the computational domain of $p\in [L, R]$, where $L<0$ and $R>0$.
\end{remark}

 From Remark \ref{rem:idea} and Eq.~\eqref{heatreformulationextend}
one sees easily that
\[
\int_{-\infty}^\infty \int_0^\infty w(t,x,p)^2 \d p \d x
=\frac{1}{2}\int_{-\infty}^\infty u(t,x)^2 \d p \d x,\]
\[
\int_{-\infty}^\infty \int_{\infty}^\infty w(t,x,p)^2 \d p \d x
=\int_{-\infty}^\infty u_0(x)^2 \d x.
\]
Standard PDE analysis using Poincare's inequality gives
\begin{equation}\label{U-W-normalization}
\int_{-\infty}^\infty \int_0^\infty w(t,x,p)^2 \d p \d x
\le \e^{-\alpha t} \int_{-\infty}^\infty \int_{\infty}^\infty w(t,x,p)^2 \d p \d x,
\end{equation}
for some positive constant $\alpha$ that depends on the (finite) domain size in $x$.

%{\color {green} Suppose the computational domain of $w$ for $p$ is $p\in -[L, L]$. We propose another way to obtain $u$ from $w$:
%\begin{equation}\label{w-to-u-3}
%u(t,x)=\frac{1}{L}\int_0^L e^pw(t,x,p)\, dp
\ .
%\end{equation}
%As will be shown later, compared with \eqref{integration} and \eqref{point}, this will significantly increase the probability of getting $u$ from the quantum state of $w$ computed from the quantum algorithm, thus reduces the overall computational cost.
%}

\begin{example}
We conduct a numerical test in one dimension to justify the above idea:
\[
\begin{cases}
u_t - u_{xx} = 0,  \qquad x \in (-1, 1),\\
u(0,x) = u_0(x),\\
u(t,-1) = u(t,1), \quad u_x(t,-1) = u_x(t,1).
\end{cases}
\]
The exact solution is given by $u(t,x) = \e^{-\pi^2 t} \sin(\pi x)$.
\end{example}

For numerical implementation, it is natural and convenient to introduce  $\alpha = \alpha(p)$ in the initial data of \eqref{heatreformulationextend} for $p<0$:
\begin{equation}\label{heatex}
\begin{cases}
\partial_t w + \partial_{xxp} w= 0, \qquad p \in (-\infty, \infty), \\
w(0,x,p) := w_0(x,p) = \e^{-\alpha |p|} u_0(x).
\end{cases}
\end{equation}
To match the exact solution, $\alpha(p) = 1$ is necessary for the region $p> 0$. In the $p>0$-domain, we will truncation the domain at $p=R$, where $R$ is sufficiently large such that $\e^{-R} \approx 0$. We will choose a large $\alpha$ for $p<0$ so the solution (see Fig.~\ref{fig:domain}) will have a support within a relatively small domain.  Since  the wave $\hat{w}$ moves to the left, one needs to choose the artificial boundary at $p=L<0$, for $|L|$ large enough such that $\hat{w}$, initially almost compact at $[L_0, R]$, will not reach the point $p=L$ during the duration of the computation. This will allow  to
use periodic boundary condition in $p$ for spectral approximation.

\begin{figure}[H]
 \centering
  \includegraphics[height=5cm,width=10cm]{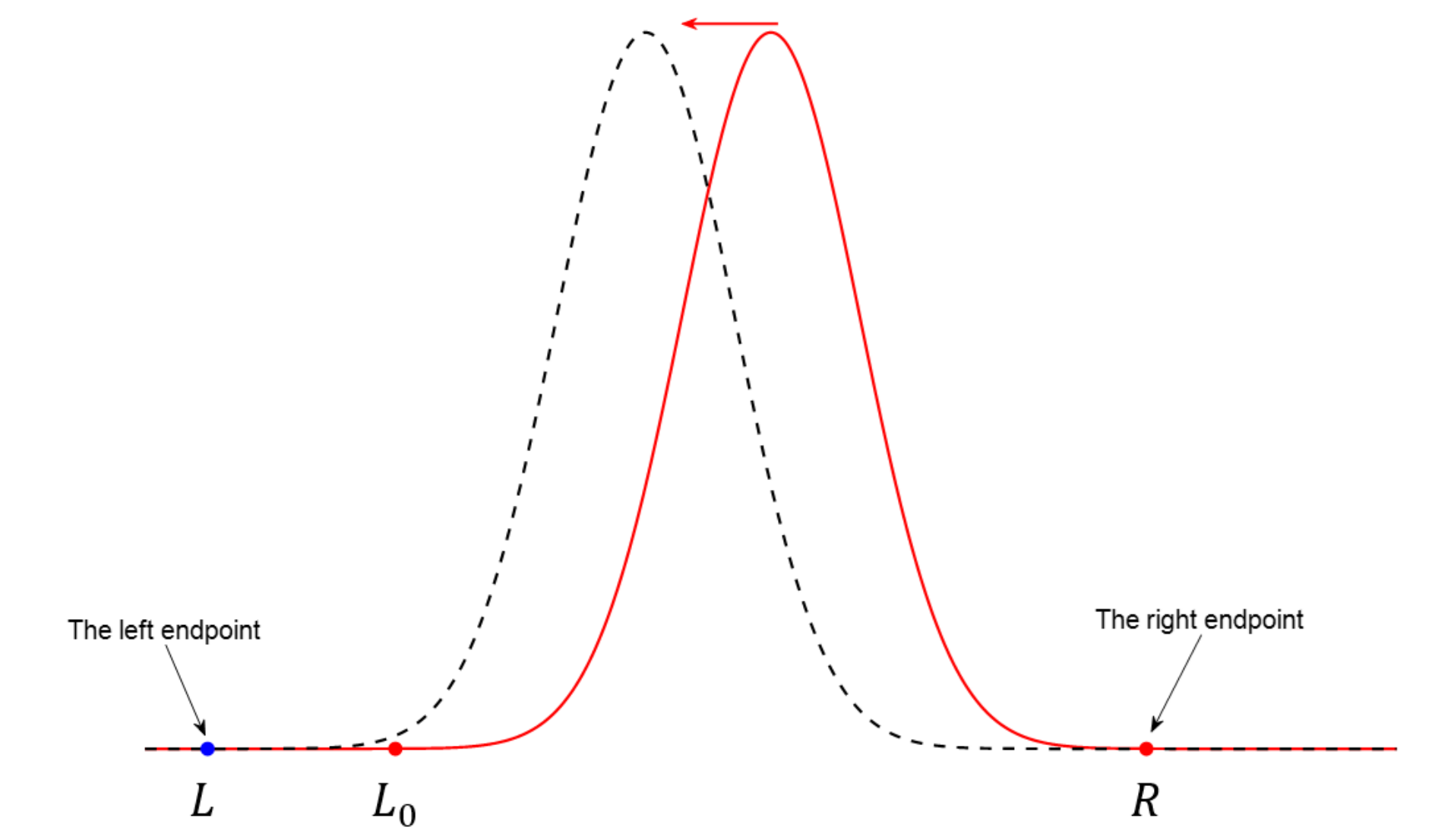}\\
  \caption{Schematic diagram for the computational domain of $p$}\label{fig:domain}
\end{figure}

The Fourier spectral approach will be used to discretise the spatial and the auxiliary variables. Let $M$ and $N$ be two even numbers.
We choose uniform mesh sizes $\Delta x = 2/M$ and $\Delta p = (R-L)/N$ for the spatial and the auxiliary variables, with the grid points denoted by $x_0<x_1<\cdots<x_M$ and $p_0<p_1<\cdots<p_N$, respectively. Let $\bb{w}(t,p) = [w(t,x_0,p), w(t,x_1,p), \cdots, w(t,x_{M-1}, p)]^T$. The discrete Fourier transform (DFT) on $x$ gives
\begin{equation}\label{originalw}
\begin{cases}
\partial_t \bb{w}(t,p) -  P_\mu^2 \partial_p\bb{w}(t,p) = 0, \qquad p \in (L, R), \\
\bb{w}(0,p) = \e^{-\alpha |p|} \bb{u}_0,
\end{cases}
\end{equation}
where $\bb{u}_0 = [u(0,x_1), \cdots, u(0,x_{M-1})]^T$, and $P_\mu$ is the matrix representation of the momentum operator $-\i \partial_x$ in the original variables. For details on the derivation of \eqref{originalw}, please refer to the notations in the next subsection. The matrix $P_\mu$ can be diagonalised via $D_\mu = \Phi^{-1} P_\mu \Phi$, where $D_\mu = \text{diag}(\mu_{-M/2}, \cdots, \mu_{M/2-1})$ is a diagonal matrix with $\mu_l = \pi l$ for $l=-M/2,\cdots, M/2-1$. Let $\hat{\bb{w}}(t,p) = \Phi^{-1} \bb{w}(t,p)$. Then one has
\begin{equation}\label{wavewhat}
\begin{cases}
\partial_t \hat{\bb{w}}(t,p) -  D_\mu^2 \partial_p\hat{\bb{w}}(t,p) = 0, \qquad p \in (L, R), \\
\hat{\bb{w}}(0,p) = \e^{-\alpha |p|} \hat{\bb{u}}_0,
\end{cases}
\end{equation}
where $\hat{\bb{u}}_0 = \Phi^{-1} \bb{u}_0$, and the $l$-th component of $\hat{\bb{w}}$, denoted by $\hat{\bb{w}}_l$,  corresponds to a linear hyperbolic system  and the wave moves from the right to the left with speed $s_l = \mu_l^2$. The analytic solution to \eqref{wavewhat} is obviously given by
\begin{equation}\label{analyticwhat}
\hat{\bb{w}}_l(t,p) = \e^{-\alpha (p+s_l t) |p+s_l t|} \hat{\bb{u}}_{0,l}, \qquad l = -M/2,\cdots, M/2-1.
\end{equation}
 According to the discussion in Remark \ref{rem:idea}, when $u_0(x)$ is sufficiently smooth, we know that the fastest left moving wave will have a speed $s_*= \mathcal{O}(1)$. Given the evolution time $T$, we can estimate a large enough $|L|$ such that $ |s_* T| < |L|$.

Now we consider the full discretisation. Let the grid value matrix $W(t):=(w(t,x_i,p_j))_{M\times N}$, which can be straightened as
\[\bb{w}(t) = [\bb{w}_0; \bb{w}_1; \cdots; \bb{w}_{M-1}] = \sum\limits_{ij} w(t,x_i,p_j) \ket{i,j},\]
where  ``;" indicates the straightening of $\{\bb{w}_i\}_{i\ge 1}$ into a column vector and
\[
\bb{w}_i = [w(t,x_i,p_0), w(t,x_i,p_1), \cdots, w(t,x_i,p_{N-1})]^T.
\]
Performing the DFT on both $x$ and $p$ yields
\begin{equation}\label{heatdiscrete}
\partial_t \bb{w}(t) - \i (P_\mu^2\otimes P_\mu) \bb{w}(t) = \bb{0},
\end{equation}
where we use the same notations $P_\mu$ for both variables since no confusion will arise. Let $F_x$ and $F_p$ be the discrete Fourier transform matrices for $x$ and $p$, respectively. One can translate Eq.~\eqref{heatdiscrete} into the frequency space:
\[\partial_t \bb{c}(t) - \i (D_\mu^2\otimes D_\mu) \bb{c}(t) = \bb{0},\]
where $\bb{c}(t) = (F_x^{-1} \otimes F_p^{-1}) \bb{w}(t)$. If $\bb{c}(t)$ is arranged as a matrix $C(t)$ in the form of $W(t)$, then the following relation is easily found:
\[W = Q C P^T = F_p C F_x^T \qquad \mbox{or} \qquad C = Q^{-1} W P^{-T} = F_p^{-1} W F_x^{-T},\]
which avoids the use of memory-consuming tensor products. Therefore, the numerical realisation can be effectively implemented via the discrete Fourier transform (see Remark \ref{rem:FPhi}). The solution $\hat{\bb{w}}_l(t,p)$ of \eqref{wavewhat} can be extracted from the $l$-th row of $F_x^{-1}W$.

\begin{figure}[H]
 \centering
  \includegraphics[scale=0.5]{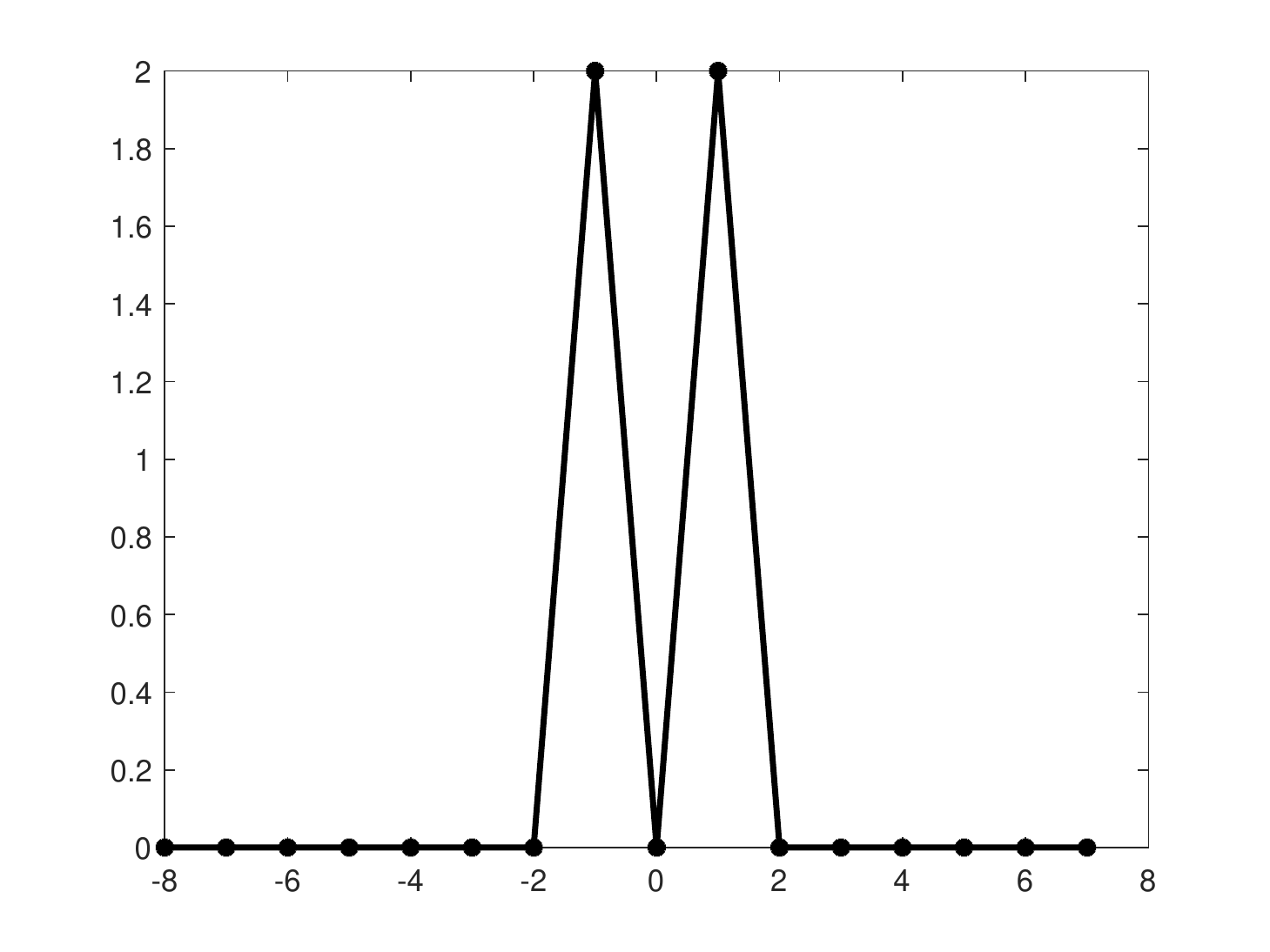}\\
  \caption{Discrete Fourier transform of the initial data $u_0(x) = \sin(\pi x)$.}\label{fig:u0hat}
\end{figure}

In the numerical test, we choose $M = 2^4 = 16$ and $N = 2^9 = 512$.   In Fig.~\ref{fig:u0hat}, we plot the modulus of the DFT coefficients $\hat{\bb{u}}_0$. Clearly, the amplitudes decay very fast in the Fourier mode $\mu_l$, and $\mu_{\pm 1}$ contribute most to the propagation. Therefore, for this example one can choose time $t$ such that
\begin{equation}\label{tchoice}
s_1 t \le L_0 - L \qquad \mbox{or} \qquad t \le T_* := (L_0-L)/s_1,
\end{equation}
where $s_1 = \pi^2 = \mathcal{O}(1)$.
For other parameters, we choose $L = -5$, $R=5$, $t = T_*$, $\alpha = 10$ for $p<0$ and $L_0 = -1$ (the estimated $T_*=0.4053$). The numerical solutions for $\bb{u}(t) = [u(t,x_0), \cdots, u(t,x_M)]^T$ are displayed in Fig.~\ref{fig:p}, with \eqref{point} and \eqref{integration} used to restore the solutions, respectively. Note that it is better to pick the point $p_*>0$ near $p=0$, for example, $p_* = p_{\frac{N}{2}+3}$, to avoid the loss of significant digits of $\e^{-p*}$ for large $p^*$.

\begin{figure}[H]
 \centering
  \subfigure[use \eqref{point}]{\includegraphics[scale=0.5]{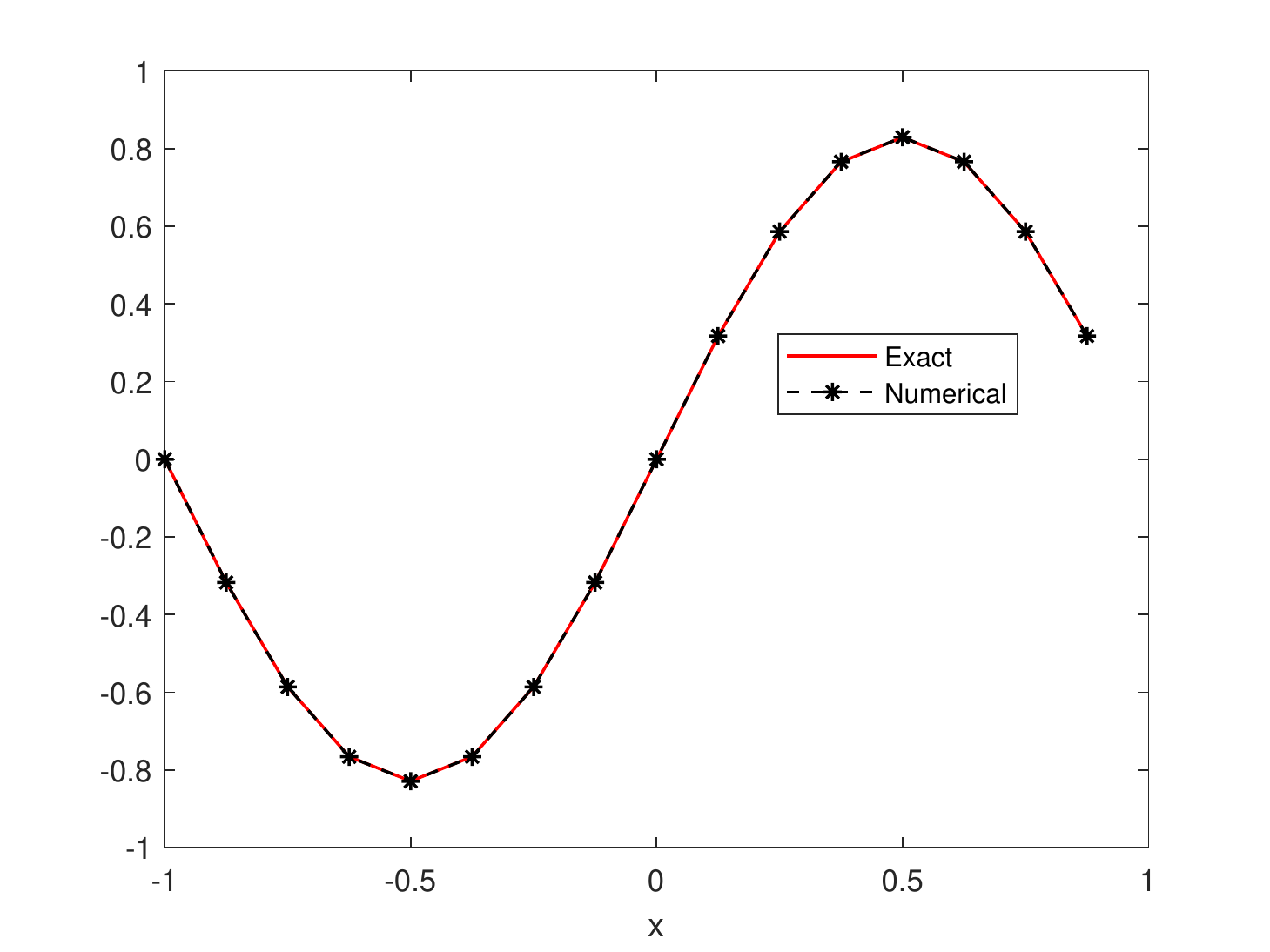}}
  \subfigure[use \eqref{integration}]{\includegraphics[scale=0.5]{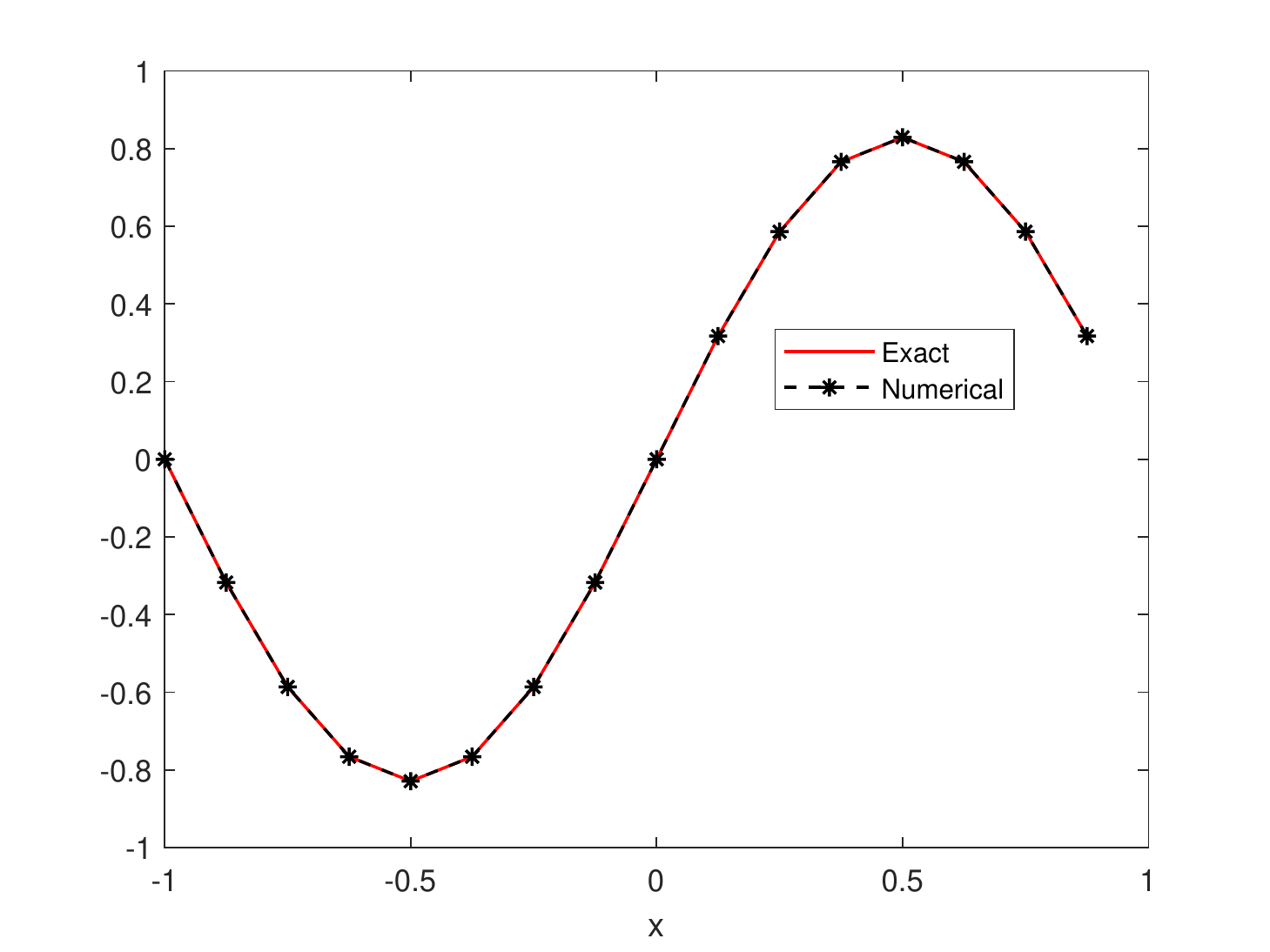}}\\
  \caption{Numerical and exact solutions of $\bb{u}(t = T_*)$ for the spectral method.}\label{fig:p}
\end{figure}

To validate the choice of computational domain  shown in Fig.~\ref{fig:domain}, we now take snapshots of  the moving of the wave corresponding to Eq.~\eqref{wavewhat}, with the result shown in Fig.~\ref{fig:what}a. Since the wave amplitude in the frequency space is a complex number and its real part is small, we in this figure display the modulus of these complex numbers, i.e., $|\hat{\bb{w}}_{l_*}(t,p_j)|$ for $j=0,1,\cdots, N-1$, where $l_*$ corresponds to the speed $s_*$. The blue line represents the initial wave $|\hat{\bb{w}}(0,p)|$, and the red and the black ones are respectively the analytic solution given by \eqref{analyticwhat} and the numerical solution at time $t = T_*$. We also plot the wave amplitudes $\bb{w}_{i_*}(t,p_j)$ in the original space in Fig.~\ref{fig:what}b, where $i_*$ corresponds to local index of $s_*$ in $\{s_i\}$. As observed, the waves in both spaces have almost moved to the left end, which validates the previous arguments.

\begin{figure}[H]
 \centering
  \subfigure[in the frequency space]{\includegraphics[scale=0.5]{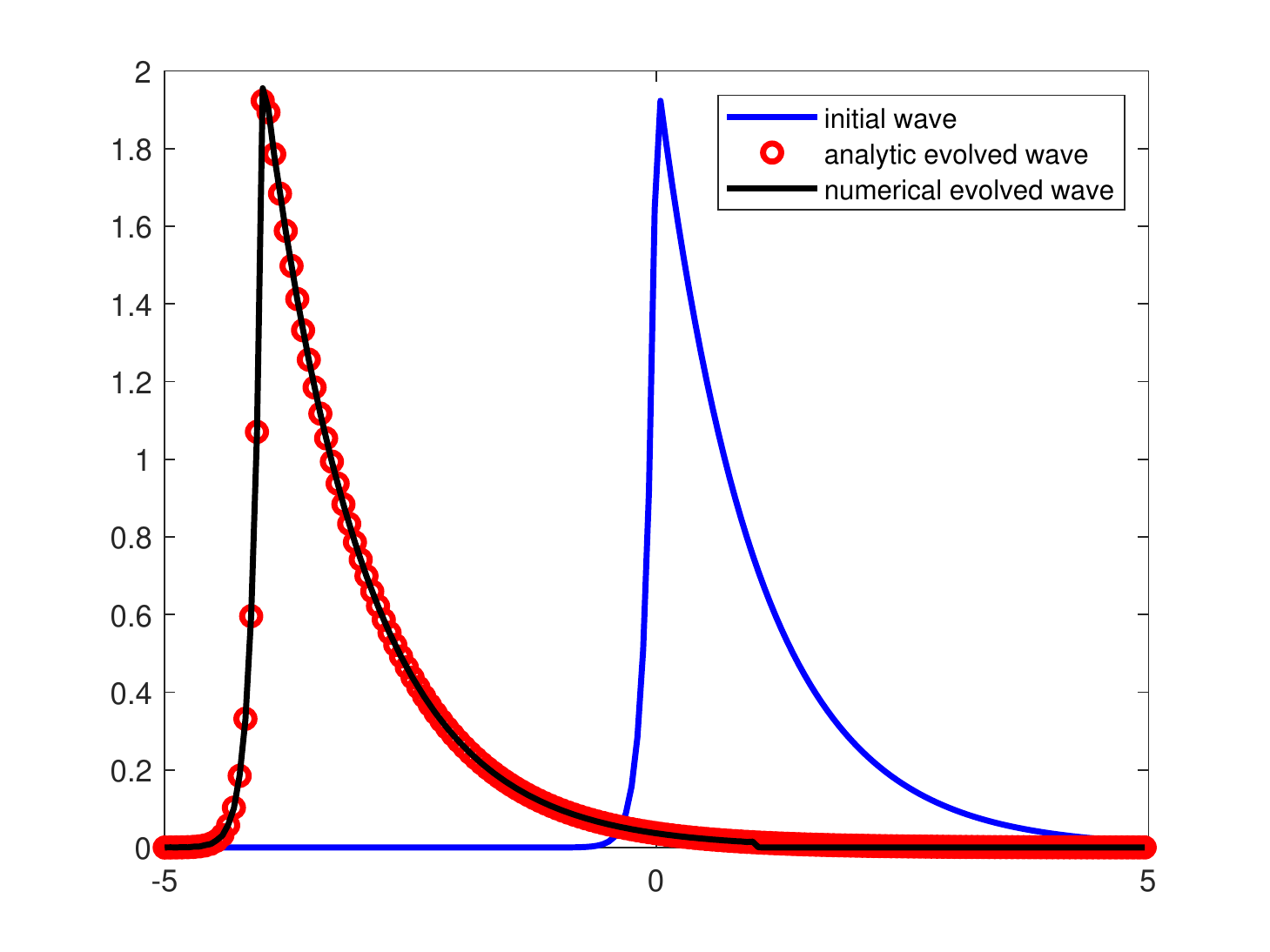}}
  \subfigure[in the original space]{\includegraphics[scale=0.5]{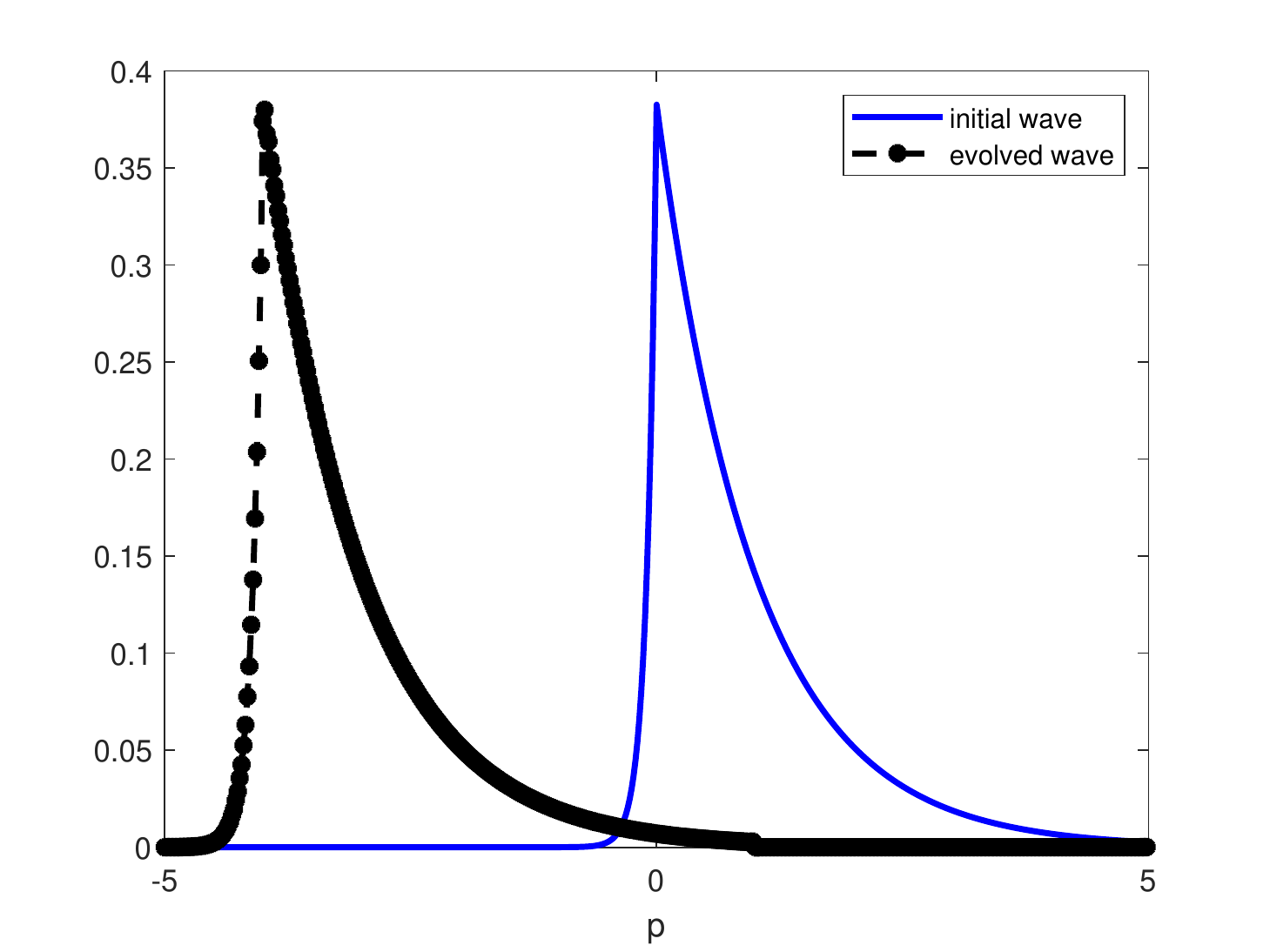}}
  \\
  \caption{The initial and evolved waves in the frequency and original spaces ($T_*=0.4053$). (a) the modulus of the wave amplitude $|\hat{\bb{w}}_{l_*}(t,p_j)|$ corresponding to \eqref{wavewhat} in the frequency space; (b) the wave amplitude $\bb{w}_{i_*}(t,p_j)$ corresponding to \eqref{originalw} in the original space.}\label{fig:what}
\end{figure}

Given $t = T = 1$, we can choose a large enough $L$ in absolute value to get satisfactory numerical results. In view of the relation \eqref{tchoice}, we take $L = L_0-T s_1 \approx 11$. Considering the periodic condition, one can choose $\alpha = 40$ and $R = 10$ for example, with the results shown in Fig.~\ref{fig:wwres}.

\begin{figure}[H]
  \centering
  \subfigure[$u(t,x)$]{\includegraphics[scale=0.35]{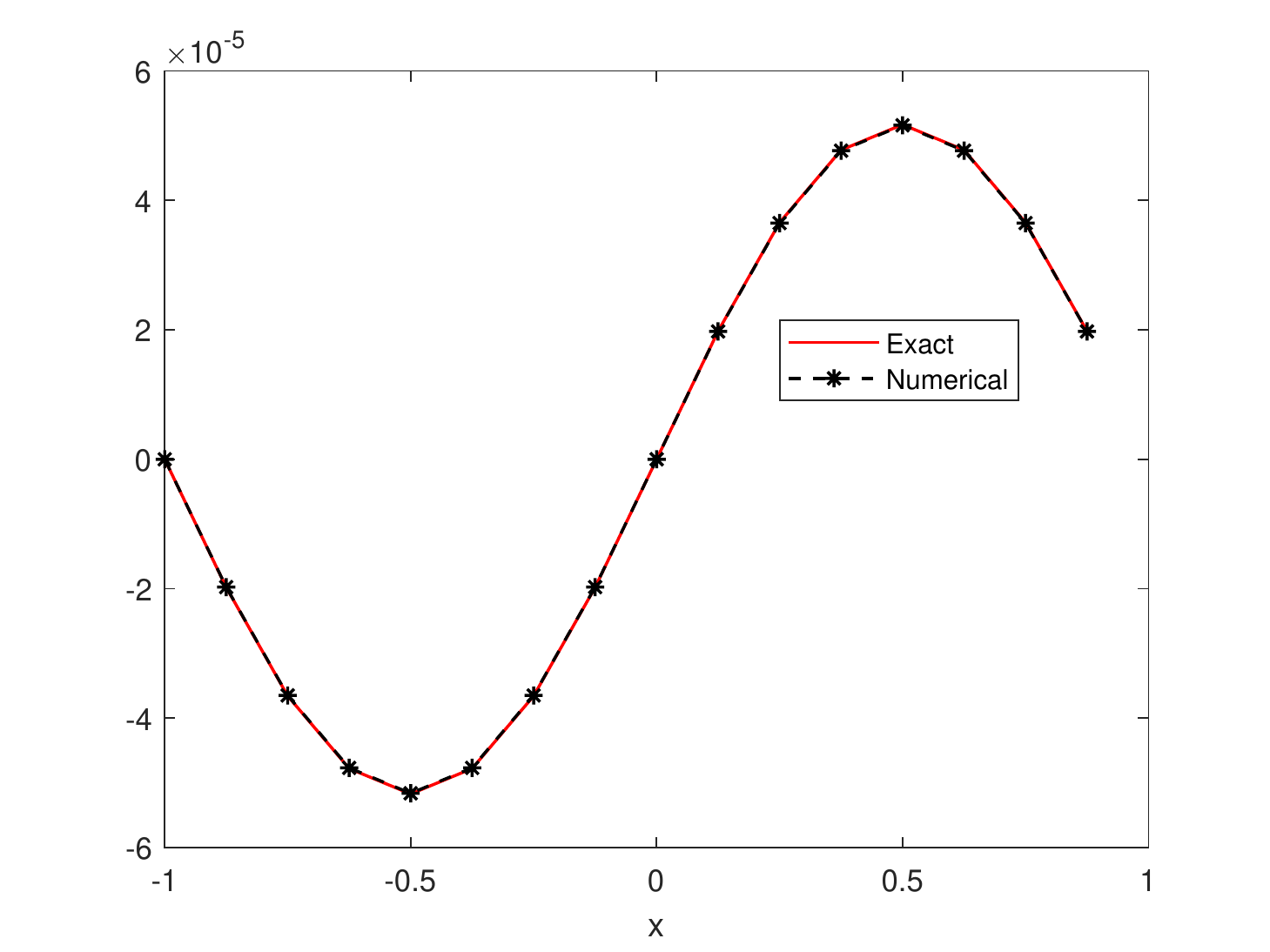}}
  \subfigure[waves in the frequency space]{\includegraphics[scale=0.35]{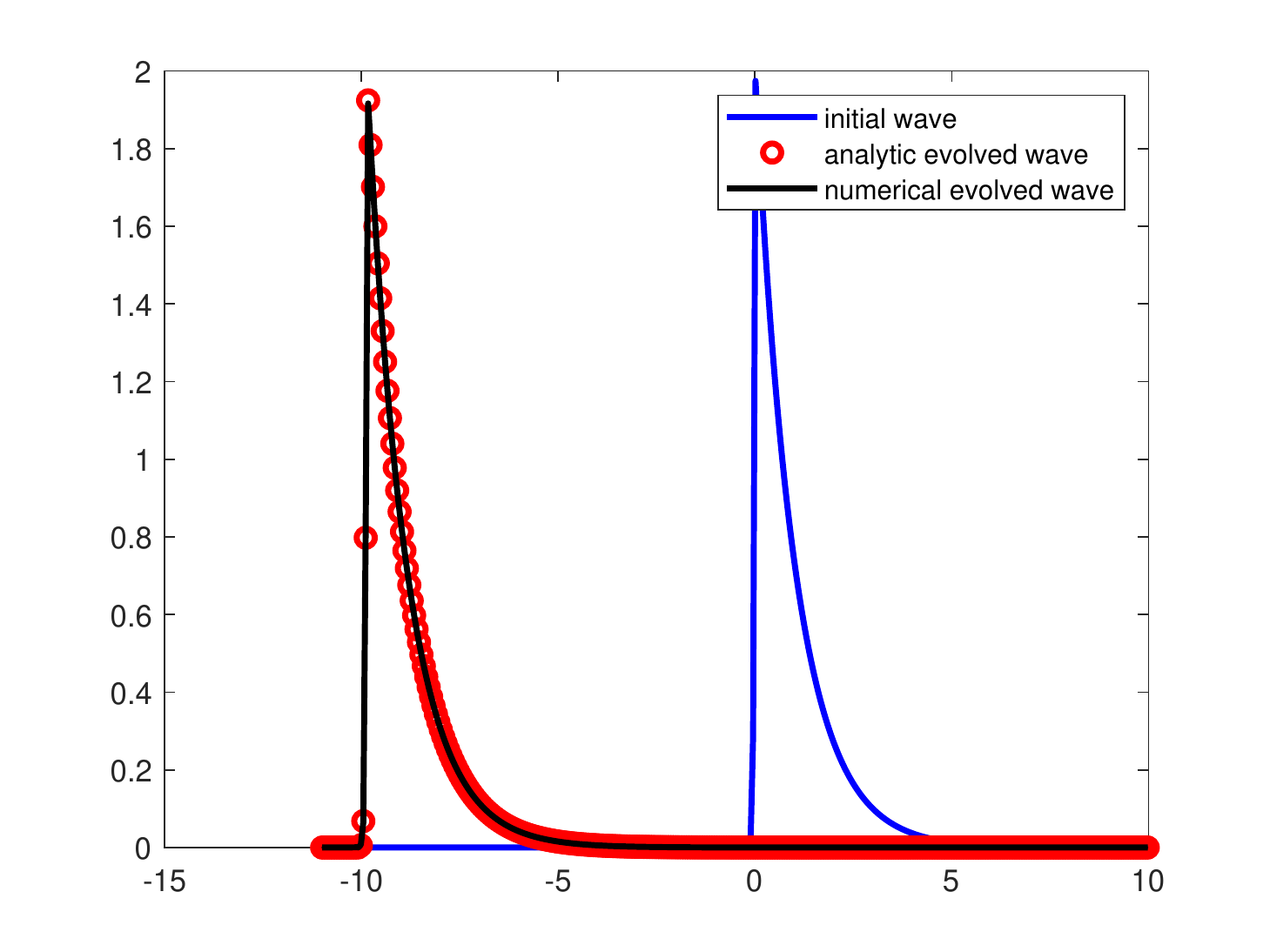}}
  \subfigure[waves in the original space]{\includegraphics[scale=0.35]{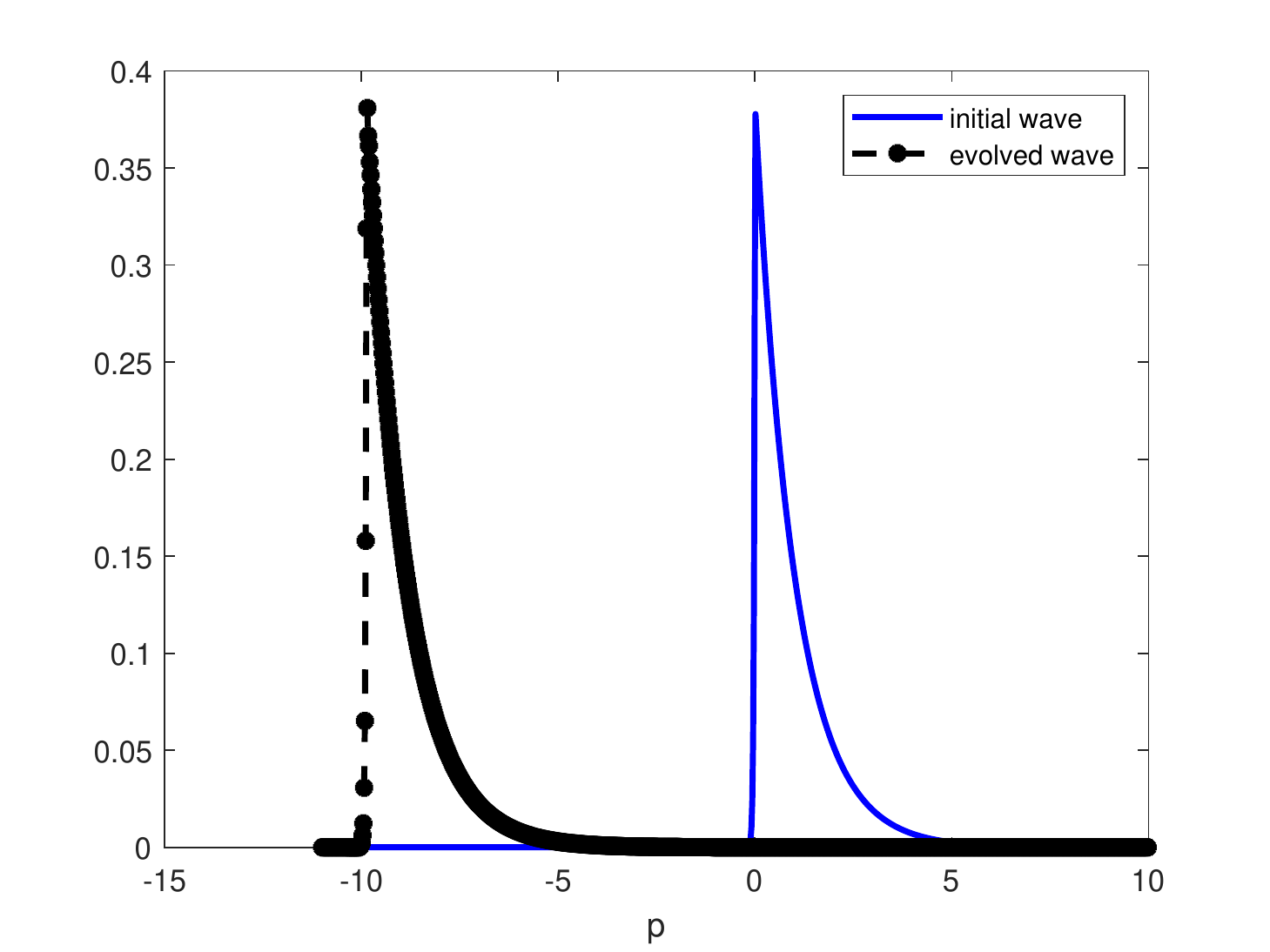}}
  \\
  \caption{Numerical results for given evolution time $t = T = 1$. (a) the numerical solution for the heat equation recovered by \eqref{integration}; (b) the modulus of the wave amplitude $|\hat{\bb{w}}_{l_*}(t,p_j)|$; (c) the wave amplitude $\bb{w}_{i_*}(t,p_j)$.}\label{fig:wwres}
\end{figure}

One can also use the finite difference discretisation to further validate the above arguments. For the spatial discretisation, we use the central difference to get
\begin{align*}
& \partial_t w_i(t,p) + \partial_p \frac{w_{i-1}(t,p)-2w_i(t,p) + w_{i+1}(t,p) }{\Delta x^2}  = 0 , \qquad i = 1,2,\cdots, M-1, \\
& w_0(t,p) = w_M(t,p), \qquad \frac{w_1(t,p) - w_{-1}(t,p)}{2\Delta x} = \frac{w_{M+1}(t,p) - w_{M-1}(t,p)}{2\Delta x},
\end{align*}
where we have introduced the ghost points $x_{-1}$ and $x_{M+1}$. To get a closed system, we assume the discretisation is valid on $x = x_0$ and $x=x_M$:
\begin{align*}
& \partial_t w_0(t,p) + \partial_p \frac{w_{-1}(t,p)-2w_0(t,p) + w_{1}(t,p) }{\Delta x^2}  = 0 , \qquad i = 0, \\
& \partial_t w_M(t,p) + \partial_p \frac{w_{M-1}(t,p)-2w_M(t,p) + w_{M+1}(t,p) }{\Delta x^2}  = 0, \qquad i = M.
\end{align*}
Summing the above two equations and eliminating the ghost values to get
\[\partial_t w_0(t,p) + \partial_p \frac{-2w_0(t,p) + w_{1}(t,p) + w_{M-1}(t,p) }{\Delta x^2}  = 0 , \qquad i = 0.\]
We then get the following system
\[\partial_t \bb{w}(t,p) + A \partial_p \bb{w}(t,p) = \bb{0}, \]
where
\[\bb{w}(t,p) = \begin{bmatrix}
w_0(t,p) \\ w_1(t,p) \\ \vdots \\ w_{M-2}(t,p) \\ w_{M-1}(t,p)
\end{bmatrix}, \qquad
A = \frac{1}{\Delta x^2} \begin{bmatrix}
-2 & 1        &         &            & 1 \\
1  & -2       &    1    &            &    \\
   & \ddots   & \ddots  &   \ddots   & \\
   &          & 1       &  -2        &  1 \\
1  &          &         &   1        & -2
\end{bmatrix}. \]
One can check that the eigenvalues of $A$ are $\lambda_k(A) = -\frac{4}{\Delta x^2} \sin^2 \frac{k \pi}{M}$ for $k = 0,1,\cdots,M-1$.
Let $\bb{w}_j^n$ be the approximation to $\bb{w} (t_n, p_j)$. For the periodic boundary condition, the value $\bb{w}_0 = \bb{w}_N$ along the boundaries are unknown. We introduce $\bb{w}_N$ into the vector of grid values $\bb{W}^n = [\bb{w}_1^n; \cdots; \bb{w}_N^n]$ with ``;" indicating the straightening of $\{\bb{w}_i\}_{i\ge 1}$ into a column vector. Since $\lambda_k \le 0$, we adopt the upwind discretisation on $p$ and obtain
\[\frac{\bb{w}_j^{n+1} - \bb{w}_j^n}{\Delta t} + A \frac{\bb{w}_{j+1}^n - \bb{w}_j^n}{\Delta p} = \bb{0},  \qquad j = 1,\cdots,N-1.\]
The above system is closed by assuming the discretisation holds at $p = p_0$. In view of the periodicity, the additional equation can be written as
\[\frac{\bb{w}_N^{n+1} - \bb{w}_N^n}{\Delta t} + A \frac{\bb{w}_1^n - \bb{w}_N^n}{\Delta p} = \bb{0},  \qquad j = N.\]
The final iterative system is
\[\bb{W}^{n+1} =  B \bb{W}^n, \qquad n = 0,1,\cdots, N_t - 1,\]
with
\[\bb{B} = \begin{bmatrix}
I+A_1 & -A_1        &         &            &   \\
      & I+A_1       &   -A_1  &            &    \\
      &             & \ddots  &   \ddots   & \\
      &             &         &  I+A_1     &  -A_1 \\
-A_1  &             &         &            & I+A_1
\end{bmatrix}, \qquad A_1 = \frac{\Delta t}{\Delta p}A. \]

The numerical results are similar to that of the spectral method under the same settings, as shown in Fig.~\ref{fig:fdmp}.

\begin{figure}[H]
  \centering
  \subfigure[use \eqref{point}]{\includegraphics[scale=0.5]{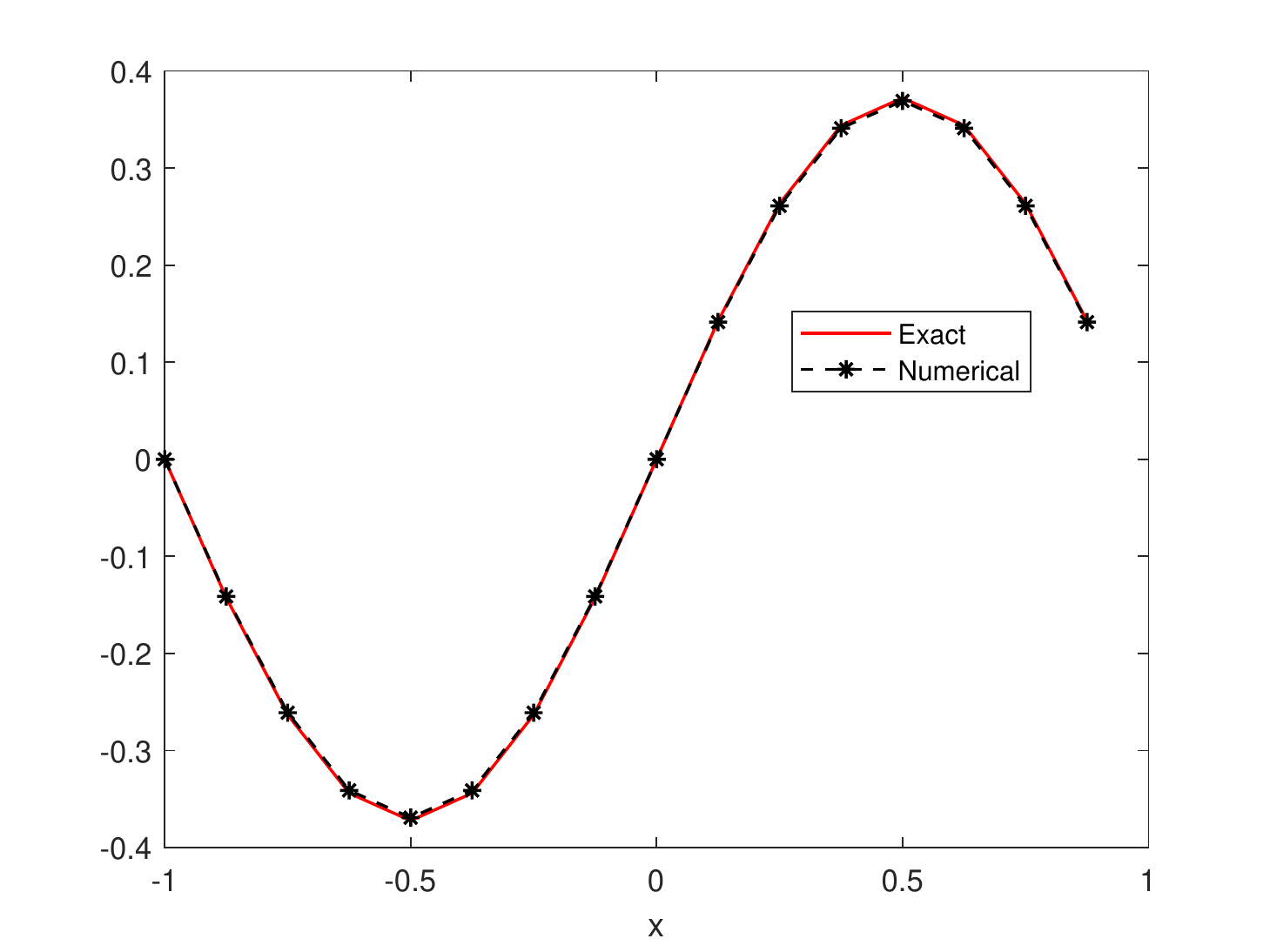}}
  \subfigure[use \eqref{integration}]{\includegraphics[scale=0.5]{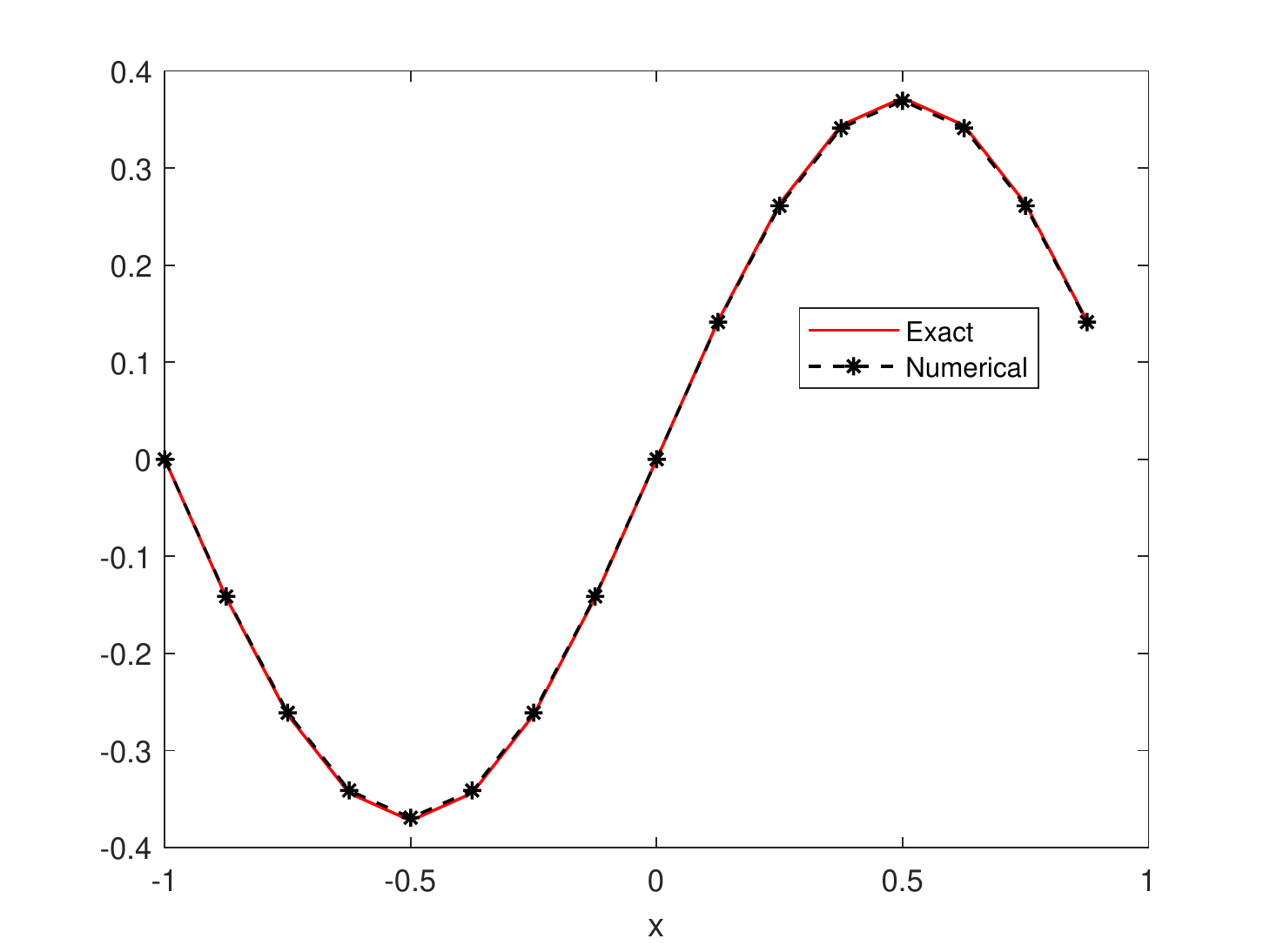}}\\
  \caption{Numerical and exact solutions of $\bb{u}(t = T_*)$ for the finite difference method.}\label{fig:fdmp}
\end{figure}

\subsubsection{The notations for the Fourier spectral discretisation}\label{subsect:notation}

For the Fourier spectral discretisation, we will use the notations given in \cite{JinLiuYu2022nonlinear}. For one-dimensional problems we choose a uniform spatial mesh size $\Delta x = (b-a)/M$  for $M=2N = 2^m$  with $m$ a positive integer, and  time step $\Delta t$, and we let the grid points be
\[x_j = a + j \Delta x, \quad j = 0,1,\cdots, M.\]
As an example we consider the periodic boundary conditions.
For $x\in [a,b]$, the 1-D basis functions for the Fourier spectral method are usually chosen as
\[\phi_l(x) = \e ^{\i \mu_l (x-a)} , \quad \mu_l = \frac{2\pi l}{b-a}, \quad  1 = -N,\cdots, N-1.\]
For convenience, we adjust the index as
\[\phi_l(x) = \e ^{\i \mu_l (x-a)} , \quad \mu_l = \frac{2\pi (l-N-1) }{b-a}, \quad 1 \le l \le M = 2N.\]
The approximation to $u(x)$ in the 1-D space is
\begin{equation}\label{Fexpand}
u(x) = \sum\limits_{l=1}^M c_l \phi_l(x), \quad x = x_j, ~~ j = 0,1,\cdots, M-1,
\end{equation}
which can be  written in vector form, $\bb{u} = \Phi \bb{c}$, where
\[\bb{u} = (u(x_j))_{M\times 1}, \quad \bb{c} = (c_l)_{M\times 1}, \quad
\Phi = (\phi_{jl})_{M\times M} = (\phi_l(x_j))_{M\times M}.\]

The $d$-dimensional grid points are given by ${x}_{\bb{j}} = (x_{j_1}, \cdots, x_{j_d})$, where $\bb{j} = (j_1,\cdots,j_d)$, and
\[x_{j_i} = a + j_i \Delta x, \quad j_i = 0,1,\cdots, M-1, \quad i = 1,\cdots,d.\]
The multi-dimensional basis functions are written as
$\phi_{\bb{l}}({x}) = \phi_{l_1}(x_1)\cdots \phi_{l_d}(x_d)$,
where $\bb{l} = (l_1,\cdots, l_d)$ and $1 \le l_i \le M$. The corresponding approximate solution is
$u(x) = \sum\nolimits_{\bb{l}} c_{\bb{l}} \phi_{\bb{l}}({x})$,
with the coefficients determined by the  values at the grid or collocation points $x_{\bb{j}} $. These collocation values will be arranged as a column vector:
\[\bb{u}(t) = \sum\limits_{\bb{j}}u(t,{x}_{\bb{j}}) \ket{j_1} \otimes \cdots \otimes \ket{j_d}.\]
That is, the $n_{\bb{j}}$-th entry of $\bb{u}$ is $u(t,{x}_{\bb{j}})$, with the global index given by
\[n_{\bb{j}}: = j_12^{d-1} + \cdots + j_d2^0, \qquad \bb{j} = (j_1,\cdots,j_d). \]
Similarly $c_{\bb{l}}$ is written in a column vector as $\bb{c} = \sum\nolimits_{\bb{l}} c_{\bb{l}} \ket{l_1} \otimes \cdots \otimes \ket{l_d}$.
For convenience, let $c_{\bb{l}} = c_{l_1}\cdots c_{l_d}$. Then,
\begin{equation}\label{Fexpandhigh}
u(t,{x}_{\bb{j}}) = \sum\limits_{\bb{l}} c_{l_1}\cdots c_{l_d} \phi_{l_1}(x_{j_1})\cdots\phi_{l_d}(x_{j_d}),
\end{equation}
and the transition between $\bb{u}$ and $\bb{c}$ is given by
\begin{equation}
\bb{u} = \bb{u}^{(1)} \otimes \cdots \otimes \bb{u}^{(d)}
 = ( \Phi \bb{c}^{(1)}) \otimes \cdots \otimes ( \Phi \bb{c}^{(d)})
 = \Phi^{\otimes ^d } \bb{c}, \label{usplit}
 \end{equation}
where
\[\Phi^{\otimes ^d } = \underbrace{\Phi \otimes \cdots \otimes \Phi}_{\text{$d$ matrices}},\qquad
\bb{c}^{(i)} = ( c_{l_i} )_{M\times 1}, \qquad \bb{u}^{(l)} = \Phi \bb{c}^{(l)},\]
\begin{equation}\label{cdef}
\bb{c} = \bb{c}^{(1)} \otimes \cdots \otimes \bb{c}^{(d)} = \sum\limits_{\bb{l}} c_{\bb{l}} \ket{l_1} \otimes \cdots \otimes \ket{l_d}.
\end{equation}

For later use, we next determine the transitions between the position operator $\hat{x}_j$ and the momentum operator $\hat{P}_j = -\i \frac{\partial}{\partial x_j}$ in discrete settings. Let $u(x)$ be a function in one dimension and $\bb{u} = [u(x_0),\cdots,u(x_{M-1})]^T$ be the mesh function with $M=2N$. The discrete position operator $\hat{x}^{\rm d}$ of $\hat{x}$ can be defined as
\[\hat{x}^{\rm d}: \bb{u} = \Big( u(x_i) \Big) \quad \to \quad
  \Big( x_i u(x_i) \Big) = D_x \bb{u}  \qquad \mbox{or} \qquad
  \hat{x}^{\rm d}\bb{u} = D_x \bb{u},\]
where $D_x = \text{diag} ( x_0, x_1, \cdots, x_{M-1} )$ is the matrix representation of the position operator in $x$-space. By the discrete Fourier expansion in \eqref{Fexpand}, the momentum operator can be discretised as
\begin{align*}
\hat{P}u(x)
& \approx \hat{P} \sum\limits_{l=1}^M c_l \phi_l(x) = \sum\limits_{l=1}^M c_l \hat{P} \phi_l(x)
  = \sum\limits_{l=1}^M c_l (-\i \partial_x \phi_l(x)) \\
& = \sum\limits_{l=1}^M c_l \mu_l \phi_l(x),  \quad \mu_l = 2\pi (l-N-1)
\end{align*}
for $x = x_j$, $j = 0,1,\cdots, M-1$, which is written in matrix form as
\[\hat{P}^{\rm d} \bb{u} =  \Phi D_\mu \Phi^{-1} \bb{u} =: P_\mu \bb{u}, \qquad
D_\mu = \text{diag} ( \mu_1, \cdots, \mu_M ),\]
where $\hat{P}^{\rm d}$ is the discrete momentum operator. The matrices $D_\mu$ and $P_\mu$ can be referred to as the matrix representation of the momentum operator in the momentum space and  the position space, respectively,  and are related by the discrete Fourier transform.
For $d$ dimensions, the discrete position operator $\hat{x}_l^{\rm d}$ is defined as
\[\hat{x}_l^{\rm d}: \bb{u} = \bb{u}^{(1)} \otimes \cdots \otimes \bb{u}^{(d)} \quad \to \quad
\bb{u}^{(1)} \otimes \cdots \otimes \tilde{\bb{u}}^{(l)} \otimes \cdots \otimes \bb{u}^{(d)}, \]
where
\[\tilde{\bb{u}}^{(l)}: = \Big( x_{j_{l_i}} u^{(l)} (x_{j_{l_i}}) \Big) = D_x \bb{u}^{(l)}.\]
Then,
\[\hat{x}_l^{\rm d} \bb{u} = (I^{\otimes^{l-1}} \otimes D_x \otimes I^{\otimes^{d-l}}) \bb{u} =: \bb{D}_l \bb{u}. \]
Using the expansion in \eqref{Fexpandhigh}, one easily finds that
\[\hat{P}_l^{\rm d} \bb{u} = (I^{\otimes^{l-1}} \otimes P_\mu \otimes I^{\otimes^{d-l}}) \bb{u} =: \bb{P}_l \bb{u}. \]
Note that $\bb{D}_l$ and $\bb{P}_l$ are Hermitian matrices, and
\begin{equation}\label{PlDl}
(\Phi^{\otimes^d})^{-1} \bb{P}_l \Phi^{\otimes^d} = I^{\otimes^{l-1}} \otimes D_\mu \otimes I^{\otimes^{d-l}}=:\bb{D}^\mu _l.
\end{equation}
For convenience, we denote by $F_x = \Phi^{\otimes^d}$ the Fourier transformation matrix in $d$ dimensions.
The above notations also apply to the variable $p$. One just needs to modify the subscript $x$ to $p$, for example, $F_p$ represents the Fourier transformation matrix for $p$. We shall use the same notations $D_\mu$ and $P_\mu$ for both $x$ and $p$ whenever no confusion will arise.

\begin{remark}\label{rem:FPhi}
Given a set of numbers $x_0, x_1, \cdots, x_{M-1}$, the discrete Fourier transform (DFT) and the inverse DFT are defined by
\[y_k = \frac{1}{\sqrt{M}}\sum\limits_{j=0}^{M-1} \e^{2\pi{\rm i} jk/M}x_j, \quad k = 0,\cdots, M-1\]
and
\[x_j = \frac{1}{\sqrt{M}}\sum\limits_{k=0}^{M-1} \e^{-2\pi{\rm i} jk/M}y_k, \quad j = 0,\cdots, M-1,\]
respectively. Denote the transformation matrix of DFT by $F$. It is easy to find the transformation matrix introduced above satisfies $\Phi = \sqrt{M} SF$, where $S = {\rm diag} \Big( [1, -1, \cdots, 1,-1]_{M\times 1} \Big)$ is a diagonal matrix.
\end{remark}

\subsubsection{The Schr\"odingerisation of the heat equation}

Eq.~\eqref{heat-Schro} clearly shows $w$ is governed by the free Schr\"odinger equation in the Fourier space for $p$. Below, in the discrete space, and for the heat equation with a source term, we also show the Schr\"odingerisation of the heat equation.
Consider the linear heat equation with a source term:
\begin{equation} \label{heateq}
\begin{cases}
\partial_tu - \Delta u = V(x)u, \\
u(0,x) = u_0(x),
\end{cases}
\end{equation}
where $V(x)$ is a scalar function. We now consider the spectral discretisation of \eqref{heatreformulation} with respect to the variable $p$. According to the above discussion, the term $\partial_p w$ in \eqref{heatreformulation} can be discretised as
\[\partial_p w =  \i \cdot (-\i \partial_p w) \longrightarrow \i P_\mu \bb{w}(t,x),\]
where $\bb{w}(t,x) = [w(t,x,p_0), w(t,x,p_1), \cdots, w(t,x,p_{M-1})]^T$. Thus we have the semi-discrete system of partial differential equations
\begin{equation*}
\begin{cases}
\partial_t \bb{w}(t,x) + \i ( \Delta_x + V(x) ) P_\mu \bb{w}(t,x) = \bb{0},  \\
\bb{w}(0,x) = \bb{w}_0(x):=(w_0(x,p_j) )_{M \times 1}.
\end{cases}
\end{equation*}
In the momentum basis, one has
\begin{equation}\label{Schrodingeriseheat}
\begin{cases}
\partial_t \bb{c}(t,x) + \i ( \Delta_x + V(x)) D_\mu \bb{c}(t,x) = \bb{0}, \\
\bb{c}(0,x) = \bb{c}_0(x):= \Phi^{-1}\bb{w}_0(x),
\end{cases}
\end{equation}
where $\bb{c} = F_p^{-1} \bb{w}$. The above PDEs can be viewed as the Schr\"odingerised system since we have introduced the imaginary unit $\i = \sqrt{-1}$ and $D_\mu= \text{diag} ( \mu_1, \cdots, \mu_M )$ is a real diagonal matrix.

In conclusion, we have converted the heat equation to a decoupled system of Schr\"odinge equation  by using the idea in Subsect.~\ref{subsect:idea} and the discrete Fourier transform on $p$.

\begin{remark}
 The Schr\"odinger equation is $\i \partial_t\Psi =-\frac{\hbar}{2m} (\Delta-\frac{2m}{\hbar^2} \tilde{V})\Psi$. To match the form above, we can interpret $\mu_j \longleftrightarrow -\hbar/(2m)$ and $V_j=-\tilde{V}/(\hbar \mu_j)$, which is the potential corresponding to the $j^{\text{th}}$ mode. However, this requires us to have the right sign $\mu_j<0$, otherwise we need to interpret ``negative mass''.  This issue can be easily resolved by introducing the new variable $\hat{\bb{c}}_j(\hat{t},\cdot) = \bb{c}_j(t,\cdot)$, where $\hat{t} = -t$ if $\mu_j>0$.
\end{remark}

\subsubsection{The quantum simulation of the heat equation} \label{subsect:heat}

One can directly simulate the (decoupled) system of  Schr\"odingerised  equations \eqref{Schrodingeriseheat}. For clarity, we will consider $(x,p)$ as a new variable and repeat the construction in the previous subsection. Let
\begin{equation}\label{wapprox}
w(t,x,p) = \sum\limits_{\bb{l}}c_{\bb{l}}(t) \phi_{\bb{l}}(x,p), \quad
\bb{l} = (l_1, \cdots, l_d, l_p).
\end{equation}
The collocation points are denoted by $(x_{\bb{j}}, p_{j_p})$ with $\bb{j} = (j_1,\cdots,j_d)$. As in \eqref{cdef}, we define $\bb{c} = \bb{c}_x \otimes \bb{c}_p$, where $\bb{c}_x = \bb{c}^{(1)} \otimes \cdots \otimes \bb{c}^{(d)}$.
We also introduce the notation $\bb{w} = \bb{w}_x \otimes \bb{w}_p$, where
$\bb{w}_x = \bb{w}^{(1)} \otimes \cdots \otimes \bb{w}^{(d)}$, and $\bb{w}^{(l)} = \Phi \bb{c}^{(l)}$ can be viewed as the approximate solution of $w$ in $x_l$ direction.
Following the discussion in Subsect.~\ref{subsect:notation}, one has
\begin{align*}
\Delta_x w_p
& =  \sum\limits_{l=1}^d \partial_{x_l}^2 \partial_p w
  = -\i \sum\limits_{l=1}^d (-\i \partial_{x_l})^2  (-\i \partial_p) w
 =  -\i \sum\limits_{l=1}^d \hat{P}_l^2  \hat{P}_p w
\longrightarrow  -\i \sum\limits_{l=1}^d (\hat{P}^{\text{d}}_l)^2 \hat{P}^{\text{d}}_p \bb{w} \\
& =  -\i \sum\limits_{l=1}^d (\bb{P}_l^2 \otimes I)(I^{\otimes^d} \otimes P_\mu) (\bb{w}_x \otimes \bb{w}_p)
 = -\i \sum\limits_{l=1}^d (\bb{P}_l^2 \otimes P_\mu ) \bb{w},
\end{align*}
where $\bb{P}_l = I^{\otimes^{l-1}} \otimes P_\mu \otimes I^{\otimes^{d-l}}$, and
\begin{align*}
 V(x) w_p
&  = \i V(x)  (-\i \partial_p) w
 \longrightarrow \i V(\hat{x}^{\text{d}})\hat{P}^{\text{d}}_p \bb{w} \\
& = \i (\bb{V} \otimes I)  (I^{\otimes^d} \otimes P_\mu) (\bb{w}_x \otimes \bb{w}_p)
 = \i (\bb{V} \otimes  P_\mu) \bb{w},
\end{align*}
where $\bb{V}$ is a diagonal matrix with
\[\bb{V}_{n_{\bb{j}}, n_{\bb{j}} } = V({x}_{\bb{j}}), \quad n_{\bb{j}} = j_12^{d-1} + \cdots +  j_d2^0. \]
The resulting ODEs are
\begin{equation}\label{heatw}
\frac{\d}{\d t} \bb{w}(t) - \i \Big( (\bb{P}_1^2 + \cdots + \bb{P}_d^2 - \bb{V} )\otimes P_\mu \Big)   \bb{w}(t) = \bb{0},
\end{equation}
which is an expected Hamiltonian system.

The system \eqref{heatw} can be solved by using, for example, the first-order time or Trotter splitting, as done for the standard Schr\"odinger equation in \cite{JinLiuYu2022nonlinear}. To do so, we first diagonalise the matrix with respect to $p$ for convenience as in \eqref{Schrodingeriseheat}. Introducing a new variable $\widetilde{\bb{w}} = (I^{\otimes^d} \otimes F_p^{-1}) \bb{w}$, one gets
\begin{equation}\label{heatHamiltonianDiag}
\frac{\d}{\d t} \widetilde{\bb{w}}(t) = \i \bb{H}  \widetilde{\bb{w}}(t), \qquad \bb{H} = ( \bb{P}_1^2 + \cdots + \bb{P}_d^2 - \bb{V} ) \otimes D_\mu.
\end{equation}
From time $t=t_n$ to time $t = t_{n+1}$, the system can be solved in two steps:
\begin{itemize}
  \item One first solves
  \[\begin{cases}
  \frac{\d}{\d t} \widetilde{\bb{w}}(t) = \i ( ( \bb{P}_1^2 + \cdots + \bb{P}_d^2 ) \otimes D_\mu)  \widetilde{\bb{w}}(t), \qquad t_n < t < t_{n+1}, \\
  \widetilde{\bb{w}}(t_n) = \widetilde{\bb{w}}^n
  \end{cases}\]
for one time step, where $\widetilde{\bb{w}}^n$ is the numerical solution at $t=t_n$. Noting the relation \eqref{PlDl}, by letting $\widetilde{\bb{c}}(t) = ( F_x^{-1} \otimes I) \widetilde{\bb{w}}(t)$, we instead solve
\[\begin{cases}
\frac{\d}{\d t} \widetilde{\bb{c}}(t) = \i \bb{H}_D  \widetilde{\bb{c}}(t),\qquad t_n < t < t_{n+1}, \\
\widetilde{\bb{c}}(t_n) = \widetilde{\bb{c}}^n = (F_x^{-1} \otimes I) \widetilde{\bb{c}}^n,
\end{cases}\]
  where
  \[\bb{H}_D = ( (\bb{D}_1^\mu)^2 + \cdots + (\bb{D}_d^\mu)^2 ) \otimes D_\mu\]
  is a diagonal matrix. The numerical solution will be denoted by $\tilde{\bb{c}}^*$.
  \item Let $\widetilde{\bb{w}}^* = ( F_x \otimes I) \widetilde{\bb{c}}^*$. The second step is to solve
  \[\frac{\d}{\d t} \widetilde{\bb{w}}(t) =  -\i (\bb{V}  \otimes D_\mu)  \widetilde{\bb{w}}(t) =: -\i \bb{H}_V \widetilde{\bb{w}}(t) \]
 for one time step, with $\widetilde{\bb{w}}^*$ being the initial data, where $\bb{H}_V $ is a diagonal matrix. This gives the updated numerical solution $\widetilde{\bb{w}}^{n+1}$.
\end{itemize}

 We denote $m$ to be the number of qubits per dimension. The total number of qubits is then given by $m_H = m_d + m_p$ for $d$-dimensional problems, where $m_d = dm$ and $m_p$ are the number of qubits on $x$ and $p$ registers, respectively. Let $\Delta x$, $\Delta p$ and $\Delta t$ be the step sizes for $x$, $p$ and $t$, respectively, where we assume the same spatial step along each dimension. Then $m \sim \log (1/\Delta x)$ and $m_p \sim \log (1/\Delta p)$. 

For convenience we write the time complexity in terms of the steps sizes and the number of qubits throughout the paper. For the given error bound $\varepsilon$, the $\varepsilon$-dependence of these quantities is determined by the particular scheme one wishes to use. For instance, the mesh strategy of the heat equation can be given by
\[\Delta x \sim (\varepsilon/d)^{1/\ell}, \qquad \Delta t \sim  \varepsilon, \qquad \Delta p \sim  \varepsilon.\]
Note that the initial condition--due to lack of regularity-- implies first-order accuracy on $p$. 

\begin{theorem} \label{thm:heat}
The solution to the heat equation can be simulated with gate complexity given by
\[N_{\text{Gates}} = T/\Delta t \cdot \mathcal{O}(d m \log m + m_p \log m_p) 
%= \frac{T }{\varepsilon} \cdot \mathcal{O}( d  \log \frac{d^{1/\ell}}{\varepsilon^{1/\ell}}  +  \log \frac{1}{\varepsilon})
,\]
where $T$ is the evolution time.
\end{theorem}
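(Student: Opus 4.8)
The plan is to reduce the whole estimate to the cost of a single time step of the Trotter scheme described around \eqref{heatHamiltonianDiag} and then multiply by the number of steps $N_t = T/\Delta t$. Since the time-marching error is already controlled by the choice of $\Delta t$ fixed in the mesh strategy above, it suffices to exhibit an implementation of one step using $\mathcal{O}(dm\log m + m_p\log m_p)$ elementary gates; the outer loop then produces the stated count.

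First I would unpack one step. After the one-time change of basis $\widetilde{\bb{w}} = (I^{\otimes d}\otimes F_p^{-1})\bb{w}$, which sends the $p$-block of $\bb{H}$ to the diagonal matrix $D_\mu$, the evolution $e^{\i\Delta t\bb{H}}$ is split into the kinetic factor generated by $(\bb{P}_1^2+\cdots+\bb{P}_d^2)\otimes D_\mu$ and the potential factor generated by $\bb{V}\otimes D_\mu$, exactly as in the two-step procedure of Subsect.~\ref{subsect:heat}. The kinetic factor is realised by conjugating the diagonal phase $e^{\i\Delta t\bb{H}_D}$ by the $x$-transform $F_x$ and $F_x^{-1}$ (using relation \eqref{PlDl}), while the potential factor is the diagonal phase $e^{-\i\Delta t\bb{H}_V}$, requiring no basis change because $\bb{V}$ is already diagonal in the position representation.

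Next I would count the transforms, which carry the dominant cost. By Remark~\ref{rem:FPhi} the matrix $F_x=\Phi^{\otimes d}$ is, up to a diagonal scaling, a tensor product of $d$ one-dimensional discrete Fourier transforms, each acting on an $m$-qubit subregister; implemented through the (approximate) quantum Fourier transform, each such transform costs $\mathcal{O}(m\log m)$, so $F_x$ and $F_x^{-1}$ together contribute $\mathcal{O}(dm\log m)$. The analogous one-dimensional transform on the $p$-register, entering through $P_\mu = F_p D_\mu F_p^{-1}$, contributes $\mathcal{O}(m_p\log m_p)$. Combining the two gives the claimed per-step Fourier cost, and multiplying by $N_t=T/\Delta t$ gives the theorem, provided the remaining diagonal unitaries do not dominate.

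The step I expect to be the main obstacle is the honest bookkeeping of those diagonal phase gates, and in particular of the coupled factor $(\bb{D}_1^\mu)^2+\cdots+(\bb{D}_d^\mu)^2$ tensored with $D_\mu$, which ties the $x$- and $p$-registers together. Implementing $e^{\i\Delta t(\sum_l\mu_{j_l}^2)\mu_{j_p}}$ through generic quantum multiplication would introduce spurious cross factors of $m$ and $m_p$ and overshoot the target bound. The point to be verified is that, since $D_\mu$ is linear in the $p$-bits and the eigenvalues $\sum_l\mu_{j_l}^2$ are fixed, precomputable numbers, this phase reduces to a layer of (controlled) single-qubit rotations whose total number is of the same order as the Fourier layer already counted, in the manner of the standard split-step Fourier simulation of the Schr\"odinger equation in \cite{JinLiuYu2022nonlinear}. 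Once this reduction is in place, the per-step bound $\mathcal{O}(dm\log m + m_p\log m_p)$ follows, and multiplication by $T/\Delta t$ completes the argument.
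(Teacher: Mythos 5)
Your proposal is correct and follows essentially the same route as the paper: Trotter splitting of \eqref{heatHamiltonianDiag}, per-step cost dominated by the $x$-register QFTs at $\mathcal{O}(dm\log m)$, with the diagonal phases $e^{\i\bb{H}_D\Delta t}$ and $e^{-\i\bb{H}_V\Delta t}$ implemented cheaply (the paper disposes of the coupled diagonal factor you flag by invoking Algorithm I of \cite{Kassal2008Diagonal,Jin2022quantumSchrodinger}, which gives $\mathcal{O}(m_H)$ gates and is absorbed into the QFT term). The only minor bookkeeping difference is that the paper charges the $p$-register transform $\mathcal{O}(m_p\log m_p)$ only twice (at the initial and final change of basis) rather than per step, which only strengthens the stated bound.
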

\begin{proof}
Given the initial state of $\widetilde{\bb{w}}^0$, applying the inverse QFT to the $x$-register, one gets $\widetilde{\bb{c}}^0$.
At each time step, one needs to consider the following procedure
\[\widetilde{\bb{c}}^n
\xrightarrow {\e^{\i \bb{H}_D \Delta t}} \widetilde{\bb{c}}^*
\xrightarrow {F_x\otimes I } \widetilde{\bb{w}}^*
\xrightarrow {\e^{-\i \bb{H}_V \Delta t}} \widetilde{\bb{w}}^{n+1}
 \xrightarrow {F_x^{-1}\otimes I } \widetilde{\bb{c}}^{n+1},\]
where $F_x = \Phi^{\otimes^d}$.
It is known that the quantum Fourier transforms in one dimension can be implemented using $\mathcal{O}(m \log m)$ gates. The diagonal unitary operators $\e^{-\i \bb{H}_V \Delta t}$ and $\e^{\i \bb{H}_D \Delta t}$ can be implemented using $\mathcal{O}(m_H)$ gates \cite{Kassal2008Diagonal,Jin2022quantumSchrodinger}. Therefore, the gate complexity required to iterate to the $n$-th step is
\begin{align*}
N_{\text{Gates}}
 = n  \mathcal{O}(d m \log m + m_p \log m_p + m_H  )
 = T/\Delta t \cdot \mathcal{O}(d m \log m + m_p \log m_p),
\end{align*}
where $m_p \log m_p$ is resulted from the QFT on the $p$ register, which is only performed twice.
%
%Note that the initial condition--due to lack of regularity-- implies the first order accuracy on $p$. The error of the spectral discretisation in $L^2$ norm is then given by
%\[\text{Err} \le  C(  \Delta t + d \Delta x^\ell + \Delta p),\]
%which leads to the following mesh strategy:
%\[\Delta x \sim (\varepsilon/d)^{1/\ell}, \qquad \Delta t \sim  \varepsilon, \qquad \Delta p \sim  \varepsilon\]
%by forcing all error terms to be of order $\varepsilon$. This completes the proof.
\end{proof}

\begin{remark}\label{rem:algI}
For the rest of the paper, we refer to the algorithm mentioned in \cite{Kassal2008Diagonal,Jin2022quantumSchrodinger} to implement $\e^{\i H t}$ as Algorithm I,  where $H$ is a diagonal matrix.
\end{remark}

\subsection{Schr\"odingerization of  the linear convection equation}\label{subsec:convection}

In this section we provide a way to Schr\"odingerise the linear convection equation
\begin{equation}\label{convection}
\partial_t u+ \partial_{x_1}u + \partial_{x_2}u + \cdots + \partial_{x_d} u= 0, \quad \bb{x} = (x_1,x_2,\cdots,x_d) \in (-1,1)^d
\end{equation}
in $d$ dimensions with periodic boundary conditions, where $u = u(t,x_1,x_2,\cdots, x_d)$.

\subsubsection{The reformulation}

Let
\[w = \sin(p) u,
\]
where $p\in [-\pi,\pi]$, which is obviously periodic with respect to $p$. Then $w$ satisfies
\[\partial_t w - \partial_{x_1,pp} u - \partial_{x_2,pp} u - \cdots - \partial_{x_d,pp} u = 0.\]
Considering the Fourier spectral discretisation on $x$, one easily gets
\[\partial_t \bb{w}(t,p) - \i \sum\limits_{l=1}^d \bb{P}_l \partial_{pp} \bb{w}(t,p) = \bb{0}, \]
where $\bb{w}(t,p) = \sum_{\bb{j}} w(t,x_{\bb{j}},p) \ket{\bb{j}}$. Let $\bb{c}_x(t,p) = F_x^{-1} \bb{w}(t,p)$. We then get the system of (decoupled) free Schr\"odinger-type equations
\begin{equation}\label{cxhyper}
\partial_t \bb{c}_x(t,p) - \i \sum\limits_{l=1}^d \bb{D}_l^\mu \partial_{pp} \bb{c}_x(t,p) = \bb{0}
\end{equation}
in the momentum space, where $\bb{D}_l^\mu$ is a diagonal matrix defined by \eqref{PlDl}.

\subsubsection{The quantum simulation}

By further applying the Fourier transform on $p$,  Eq.~\eqref{cxhyper} becomes
\[\partial_t \tilde{\bb{c}}_x(t) + \i \sum\limits_{l=1}^d (\bb{D}_l^\mu \otimes P_\mu^2) \tilde{\bb{c}}_x(t) = \bb{0}, \]
where
\[\tilde{\bb{c}}_x (t) = [\tilde{\bb{c}}_{x,0} (t); \cdots; \tilde{\bb{c}}_{x,M-1} (t)], \qquad \tilde{\bb{c}}_{x,i} (t) = \sum_k \bb{c}_{x,i}(t,p_k) \ket{k} .\]
 Let $\bb{c}(t) = (I^{\otimes^d} \otimes F_p^{-1})\tilde{\bb{c}}_x$. Then,
\begin{equation}\label{chyper}
\partial_t \bb{c}(t) + \i \sum\limits_{l=1}^d (\bb{D}_l^\mu \otimes D_\mu^2) \bb{c}(t) = \bb{0},
\end{equation}
where $\bb{c}$ is exactly the momentum variables of $\bb{w}$, i.e., $\bb{c} = (F_x^{-1} \otimes F_p^{-1}) \bb{w}$.

\begin{theorem} \label{thm:hyper1}
The solution to the convection equation can be simulated with gate complexity
\[N_{\text{Gates}} = \mathcal{O}((d+1) m \log m),\]
where we have assumed the same number of qubits along every direction.
\end{theorem}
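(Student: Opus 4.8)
The plan is to exploit the crucial feature that distinguishes this case from Theorem~\ref{thm:heat}: the Schrödingerised system \eqref{chyper} is generated by a Hamiltonian that is \emph{simultaneously} diagonal in the joint momentum basis of $x$ and $p$, with no potential term to spoil commutativity. Reading off \eqref{chyper}, the generator is $\bb{H} = -\sum_{l=1}^d \bb{D}_l^\mu \otimes D_\mu^2$, so that \eqref{chyper} reads $\partial_t \bb{c} = \i\bb{H}\bb{c}$. Since each $\bb{D}_l^\mu = I^{\otimes^{l-1}} \otimes D_\mu \otimes I^{\otimes^{d-l}}$ is diagonal by \eqref{PlDl} and $D_\mu^2$ is diagonal, the whole of $\bb{H}$ is a real diagonal matrix. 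Consequently the exact evolution over the full interval $[0,T]$ is realised by the single diagonal unitary $\e^{\i\bb{H}T} = \exp\!\big(-\i T \sum_{l=1}^d \bb{D}_l^\mu \otimes D_\mu^2\big)$, with no Trotter splitting and hence no $T/\Delta t$ factor---this is the essential simplification relative to the heat equation.

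I would then lay out the circuit on the $m_H = dm + m_p$ qubit register. Starting from the initial state $\bb{w}^0$ encoding $w_0 = \sin(p)u_0$, I apply $(F_x^{-1}\otimes F_p^{-1})$ to pass to $\bb{c} = (F_x^{-1}\otimes F_p^{-1})\bb{w}$. Because $F_x = \Phi^{\otimes^d}$ factors into $d$ independent one-dimensional quantum Fourier transforms, each implementable with $\mathcal{O}(m\log m)$ gates, the $x$-register transform costs $\mathcal{O}(dm\log m)$, while the single QFT on the $p$-register costs $\mathcal{O}(m_p\log m_p) = \mathcal{O}(m\log m)$ under the stated assumption $m_p = m$. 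Next I apply $\e^{\i\bb{H}T}$, which by Algorithm~I (Remark~\ref{rem:algI}, \cite{Kassal2008Diagonal,Jin2022quantumSchrodinger}) costs only $\mathcal{O}(m_H) = \mathcal{O}((d+1)m)$ gates since $\bb{H}$ is diagonal. Finally, to recover the solution in the original variables, I conjugate back by $(F_x\otimes F_p)$ at a further $\mathcal{O}((d+1)m\log m)$ gates.

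Summing the contributions, the (two rounds of) Fourier transforms dominate at $\mathcal{O}((d+1)m\log m)$, whereas the diagonal exponential contributes only $\mathcal{O}((d+1)m)$, which is absorbed; this gives $N_{\text{Gates}} = \mathcal{O}((d+1)m\log m)$ as claimed. The only point requiring care---rather than a genuine obstacle---is the justification that the whole evolution is exact in a single application of $\e^{\i\bb{H}T}$. This rests on the double Fourier transform having simultaneously diagonalised all $d$ mutually commuting terms $\bb{D}_l^\mu \otimes D_\mu^2$, so that the exponential of their sum is exact rather than a first-order splitting approximation. One should also note that the recovery of $u$ from $w = \sin(p)u$ is a post-processing step and does not enter the quoted simulation gate count.
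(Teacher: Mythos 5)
Your proposal is correct and follows essentially the same route as the paper's proof: a single (inverse) quantum Fourier transform on the $x$- and $p$-registers to reach the simultaneously diagonal form \eqref{chyper}, one application of the diagonal unitary $\e^{\i\bb{H}T}$ at cost $\mathcal{O}(m_H)$ via Algorithm I with no Trotter splitting, and a Fourier transform back, so that the $\mathcal{O}((d+1)m\log m)$ cost of the QFTs dominates. Your added emphasis on why no $T/\Delta t$ factor appears (all terms are simultaneously diagonalised, so the exponential of the sum is exact) is a correct elaboration of what the paper leaves implicit.
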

\begin{proof}
Given the initial state of $\bb{w}^0$, the implementation involves an application of an inverse quantum Fourier Transform (QFT), followed by a multiplication of a diagonal unitary operator $\bb{H} = \sum\limits_{l=1}^d (\bb{D}_l^\mu \otimes D_\mu^2)$ and a QFT. Hence the gate complexity (see Algorithm I in Remark \ref{rem:algI}) is
\begin{align*}
N_{\text{Gates}}  =  \mathcal{O}( m_{d+1}  +  2 (d+1) m \log m ) = \mathcal{O}((d+1)  m \log m ),
\end{align*}
which completes the proof.
\end{proof}

\begin{remark}
As was done for the Liouville equation in \cite{JinLiu2022nonlinear},
one can apply the Fourier spectral discretisation to the original equation and gets
\[
\partial_t \bb{c}_x(t) + \i \sum\limits_{l=1}^d \bb{D}_l^\mu \bb{c}_x(t) = \bb{0},
\]
 where $\bb{c}_x = F_x^{-1} \bb{u}$ and $\bb{u}(t) = \sum_{\bb{j}} u(t,x_{\bb{j}}) \ket{j}$. The number of quantum gates required for its simulation is $\mathcal{O}(d m \log m)$, which is comparable to that for the Schr\"odingerised system.
 On the other hand,
 one can get $\bb{u}$ by projecting $\bb{w}$ onto the $x$-register for the Schr\"odingersiation approach since $w(t,x,p) = \sin(p) u(t,x)$ is separated in $x$ and $p$, which means the costs of the computation of the observables are also comparable. Here we want to show that --for conceptual interest--the first order hyperbolic equation can be transformed into a system of Schr\"odinger-type equations.
\end{remark}

\section{Quantum simulation of general linear system of ODEs} \label{sec:general}

Not every linear PDE can be transformed into the exact form of the Schr\"odinger equation, but we will see that they can certainly be transformed into Hamiltonian systems in the discrete setting. To do so, one just needs to show that a linear system of ordinary differential equations can be converted into a Hamiltonian system, since a PDE, after spatial discretisation, becomes a system of ODEs.

\subsection{Schr\"odingerisation of linear ODEs} \label{subsect:generalisation}

Suppose one needs to solve the following ODEs
\begin{equation}\label{ODElinear}
 \begin{cases}
 \frac{\d \bb{u}(t)}{\d t} = A \bb{u}(t) + \bb{b}(t), \\
 \bb{u}(0) = \bb{u}_0,
 \end{cases}
\end{equation}
where matrix  $A$ is independent of time, and  $A^{\dagger} \neq A$ in general. We remark that it suffices to assume $\bb{b}(t) = \bb{0}$. Otherwise one can instead consider the augmented system:
\[ \begin{cases}
 \frac{\d \bb{u}(t)}{\d t} = A \bb{u}(t) + \bb{b}(t) v,  \qquad \bb{u}(0) = \bb{u}_0,\\
 v_t = 0, \qquad v(0) = 1,
 \end{cases}\]
where the second equation gives $v(t) \equiv 1$, which leads to the original ODE system.  The above ODEs can be written in the following compact form
\[ \begin{cases}
 \frac{\d \tilde{\bb{u}}(t)}{\d t} = \tilde{A} \tilde{\bb{u}}(t)  \\
 \tilde{\bb{u}}(0) = \tilde{\bb{u}}_0
 \end{cases},\qquad \tilde{\bb{u}} = \begin{bmatrix} \bb{u} \\ v \end{bmatrix},  \qquad\tilde{A} = \begin{bmatrix}
A & \bb{b}(t) \\
\bb{0}^T & 0
\end{bmatrix}, \qquad
\tilde{\bb{u}}_0 = \begin{bmatrix} \bb{u}_0 \\ 1 \end{bmatrix},\]
where the zero vector $\bb{0}$ has the same size as $\bb{b}$. For this reason, without loss of generality,  we assume $\bb{b} = \bb{0}$ in the following.

\begin{remark} \label{rem:augment}
 Quantum simulations for time-dependent or independent boundary value problems are quite difficult because the ODE system resulting from spatial discretisations is not necessarily a Hamiltonian system. Spatial discretization of the boundary condition, for example the Dirichlet boundary condition, could also give rise to  $\bb{b} \not= \bb{0}$. Our  Schr\"odingerisation approach combined with the above augmentation technique will resolve this problem in a generic and an efficient way. This will be further investigated in our forthcoming work.
\end{remark}

One can always decompose $A$ into a Hermitian term and an anti-Hermitian term:
\[A = H_1 + \i H_2,\]
where
\[
H_1 = \frac{A+A^{\dagger}}{2} = H_1^{\dagger}, \qquad H_2 = \frac{A-A^{\dagger}}{2 \i} = H_2^{\dagger}.
\]
Apparently, $H_1 H_2 = H_2 H_1$ holds if and only if $A^\dag A = A A^\dag$. In view of the stability, it is natural to assume that $H_1$ is negative semi-definite (note that $\bb{x}^\dag H_1 \bb{x} = \bb{x}^\dag A \bb{x}$).  In addition, it is simple to see that
\[2s(H_1) \sim s(A) \sim 2s(H_2) \qquad \mbox{and} \qquad \|H_1\|_{\max}, \|H_2\|_{\max} \le \|A\|_{\max},\]
where $s(A)$ is the sparsity of $A$ and $\|A\|_{\max}$ denotes the largest entry of $A$ in absolute value.

Using the warped phase transformation $\bb{v}(t,p) = \e^{-p} \bb{u}(t)$ for $p>0$ and symmetrically extend the initial data to $p<0$ as in Subsect.~\ref{subsect:idea}, the ODEs  are then transferred to a system of linear convection equations:
\begin{equation}\label{u2v}
\begin{cases}
 \frac{\d}{\d t} \bb{v}(t,p) = A \bb{v}(t,p) = - H_1 \partial_p \bb{v} + \i H_2 \bb{v}, \\
 \bb{v}(0,p) = \e^{-|p|} \bb{u}_0.
 \end{cases}
\end{equation}
The solution $\bb{u}(t)$ can be restored by
\[\bb{u}(t) = \int_0^\infty \bb{v}(t,p) \d p.\]
Apply the discrete Fourier transformation on $p$ to get
\begin{equation}\label{heatww}
\frac{\d}{\d t} \bb{w}(t) = -\i ( H_1 \otimes P_\mu ) \bb{w} + \i (H_2 \otimes I) \bb{w},
\end{equation}
which is a Hamiltonian system as expected, where $\bb{w}$ collects all the grid values of $\bb{v}(t,p)$ and is defined by
\[\bb{w} = [\bb{w}_1; \bb{w}_2; \cdots; \bb{w}_n], \qquad \bb{w}_i = \sum_k \bb{v}_i (t,p_k) \ket{k},\]
with ";" indicating the straightening of $\{\bb{w}_i\}_{i\ge 1}$ into a column vector. By the change of variables $\tilde{\bb{w}} = (I_u \otimes F_p^{-1})\bb{w}$, one has
\begin{equation}\label{generalSchr}
\frac{\d}{\d t} \tilde{\bb{w}}(t) = -\i ( H_1 \otimes D_\mu ) \tilde{\bb{w}} + \i (H_2 \otimes I) \tilde{\bb{w}}.
\end{equation}

In the sequel, we assume $A$ is independent of time and apply the Hamiltonian simulation algorithm in \cite{Berry-Childs-Kothari-2015} as described in the following lemma. For time-dependent Hamiltonians, we refer the reader to \cite{BerryChilds2020TimeHamiltonian,An2022timedependent,AnLin2021TimeHamiltonian} for references.  As in \cite{Berry-Childs-Kothari-2015}, we are concerned with the sparse access to the matrix, with the definition given below.

From now on we define $s$ as the sparsity of $A$, and $\|A\|_{\max}$ as the largest entry of $A$ in absolute value,

\begin{definition}
Let $A$ be a Hermitian matrix with the $(i,j)^{\text{th}}$ entry denoted by $A_{ij}$. Sparse access to $A$ is referred to as a 4-tuple $(s, \| A \|_{\text{max}}, O_A, O_F)$. Here $O_A$ is a unitary black box which can access the matrix elements $A_{ij}$ such that
\[O_A |j\rangle|k\rangle|z\rangle = |j\rangle|k\rangle|z\oplus A_{jk}\rangle\]
for any $j,k \in \{1,2,\cdots,N\}=:[N]$, where the third register holds a bit string representing of $A_{jk}$;
$O_F$ is a unitary black box which allows to perform the map
\[O_F |j\rangle|l\rangle = |j\rangle|F(j,l)\rangle\]
for any $j\in [N]$ and $l \in [s]$, where the function $F$ outputs the column index of the $l^{\text{th}}$ non-zero elements in row $j$.
\end{definition}

The quantum algorithm for general sparse Hamiltonian simulation with nearly optimal dependence on all parameters can be found in \cite{Berry-Childs-Kothari-2015}. The run time is measured in terms of the query complexity or the number of queries made to the oracles $O_A$ and $O_F$.

\begin{lemma}[Algorithm II, Theorems 1-2 in \cite{Berry-Childs-Kothari-2015}]\label{lem:complexityHamiltonian}
An $s$-sparse Hamiltonian $H$ acting on $m_H$ qubits can be simulated within error $\varepsilon$ with
\[\mathcal{O}\Big(  \tau \frac{\log (\tau/\varepsilon) }{\log\log (\tau/\varepsilon)} \Big)\]
queries and
\[\mathcal{O}\Big(  \tau ( m_H + \log^{2.5}(\tau/\varepsilon) )\frac{\log (\tau/\varepsilon) }{\log\log (\tau/\varepsilon)} \Big)
= \mathcal{O}( \tau m_H \cdot \text{polylog})\]
additional 2-qubits gates, where $\tau = s \|H\|_{\max} t$ and $t$ is the evolution time, and
\[ \text{polylog}  \equiv  \log^{2.5}(\tau/\varepsilon) \frac{\log (\tau/\varepsilon) }{\log\log (\tau/\varepsilon)}.\]
 This result is near-optimal.
\end{lemma}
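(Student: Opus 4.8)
The plan is to reconstruct the argument of \cite{Berry-Childs-Kothari-2015}, since the statement is a direct transcription of their Theorems 1--2; the proof offered in this paper would amount to a citation, so I describe how I would assemble the result from the sparse-access primitives. The whole scheme rests on turning the oracles $O_A, O_F$ into a single \emph{quantum walk} step and then synthesising $\e^{-\i H t}$ as a carefully truncated function of that walk. First I would build the walk operator: from the sparse access $(s,\|H\|_{\max},O_A,O_F)$ one constructs, using $O(1)$ oracle calls and $O(m_H)$ auxiliary gates, an isometry $T$ sending each $\ket{j}$ to a normalised superposition over the $s$ nonzero entries of row $j$, with amplitudes determined by the entries $H_{jk}$. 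The Szegedy-type walk $W = \i S(2TT^\dagger - I)$, with $S$ the swap on the doubled register, then has the spectral property that on the eigenspaces of $H$ its eigenvalues are $\e^{\pm \i \arcsin(\lambda_k/(s\|H\|_{\max}))}$, where $\lambda_k$ are the eigenvalues of $H$. Each application of $W$ costs $O(1)$ queries and $O(m_H + \log^{2.5}(\tau/\varepsilon))$ extra gates, which is where the additive $m_H$ in the gate bound enters.

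Second, I would express the target $\e^{-\i H t}$ as a function $g(W)$ of the walk. Because $W$ encodes $\arcsin(\lambda/(s\|H\|_{\max}))$ rather than $\lambda$ itself, $g$ must simultaneously invert the $\arcsin$ and exponentiate. The key step is to approximate $g$ on a single time segment of length $\Delta t$ chosen so that $\tau_{\mathrm{seg}} = s\|H\|_{\max}\Delta t = O(1)$ by a linear combination of integer powers of $W$, namely $\sum_{k=0}^{K}\beta_k W^k$ (equivalently, Chebyshev polynomials in the walk). The truncation order needed for error $\varepsilon$ is governed by the tail of the exponential series: one needs roughly $(K+1)! \gtrsim 1/\varepsilon$, which yields the characteristic bound $K = O\!\big(\tfrac{\log(1/\varepsilon)}{\log\log(1/\varepsilon)}\big)$. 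Implementing $\sum_k \beta_k W^k$ uses the linear-combination-of-unitaries (LCU) construction: a \emph{select} operation applies $W^k$ controlled on an ancilla index using $O(K)$ controlled walk steps, and a \emph{prepare} operation loads the coefficients $\beta_k$; since the one-norm $\sum_k|\beta_k|$ stays $O(1)$ on an $O(1)$-length segment, a single round of \emph{oblivious} amplitude amplification converts the block-encoded operator into the exact isometry $g(W)$ with only constant overhead.

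Third, I would stitch the segments together. Splitting $[0,t]$ into $r = \lceil \tau \rceil$ segments each of size $\tau_{\mathrm{seg}}=O(1)$ and allocating error $\varepsilon/r$ per segment gives total query complexity $r\cdot O(K) = O(\tau K) = O\!\big(\tau \tfrac{\log(\tau/\varepsilon)}{\log\log(\tau/\varepsilon)}\big)$; the $\tau/\varepsilon$ inside the logarithm is exactly this error reallocation. Multiplying the per-query gate cost $O(m_H + \log^{2.5}(\tau/\varepsilon))$ by the query count reproduces the stated additional-gate bound $\mathcal{O}\big(\tau(m_H+\log^{2.5}(\tau/\varepsilon))\tfrac{\log(\tau/\varepsilon)}{\log\log(\tau/\varepsilon)}\big)$. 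Near-optimality then follows from the no-fast-forwarding theorem and the known parameter lower bounds (linear in $\tau$, with the $\log/\log\log$ dependence on $1/\varepsilon$ matching the matching bound), which I would cite rather than reprove.

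I expect the main obstacle to be the LCU coefficient control together with the obliviousness of the amplitude amplification. One must choose the polynomial approximation of $g$ so that simultaneously the approximation error is $\varepsilon$, the one-norm of the coefficients $\beta_k$ stays bounded so that amplitude amplification needs only $O(1)$ rounds, and the block-encoded operator is close enough to an isometry for \emph{oblivious} amplification — which acts without access to the input state — to succeed. Reconciling these three requirements, and in particular proving that the $\arcsin$-corrected exponential admits a bounded-norm low-degree expansion, is the technically delicate heart of \cite{Berry-Childs-Kothari-2015}.
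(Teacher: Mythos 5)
The paper does not prove this lemma at all: it is imported verbatim as Theorems 1--2 of \cite{Berry-Childs-Kothari-2015}, so the only ``proof'' here is the citation. Your reconstruction --- the Szegedy/Childs walk built from $O_A,O_F$ with eigenphases $\arcsin(\lambda/(s\|H\|_{\max}))$, the truncated linear combination of walk powers with $K=\mathcal{O}(\log(\tau/\varepsilon)/\log\log(\tau/\varepsilon))$ driven by factorial coefficient decay, oblivious amplitude amplification per $O(1)$-sized segment, and near-optimality from the no-fast-forwarding theorem --- is a faithful and essentially correct outline of the argument in that reference, so there is nothing to correct against the paper itself.
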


For convenience, we assume $A$ arises from some discretisation in $d$-dimensions, which implies $A$ is of order $N_A \sim 2^{m_d}$, where $m_d = d m$ and $m$ is the number of qubits along each direction.
\begin{theorem}\label{thm:Schrodingerisation}
The Schr\"odingerisation method has gate complexity
\begin{align*}
N_{\text{Gates,Schr}}
 = (m_d + m_p)  \widetilde{\mathcal{O}}( s(A)\|A\|_{\max}/\Delta p ) + \mathcal{O}( m_p \log m_p),
\end{align*}
where $N_A$ is the order of $A$.
In particular, if $H_1$ can be diagonalised in the momentum basis and $H_2$ is a diagonal matrix, then the Schr\"odingerisation method can be simulated with
\[N_{\text{Gates,Schr}} = T/\Delta t \cdot  \mathcal{O}( d m \log m + m_p \log m_p).\]
\end{theorem}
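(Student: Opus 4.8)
The plan is to establish the two assertions separately: the general bound by invoking the sparse Hamiltonian simulation of Lemma~\ref{lem:complexityHamiltonian} (Algorithm~II), and the special case by reducing to the Trotter-splitting argument already carried out for the heat equation in Theorem~\ref{thm:heat}. For the general bound I would start from the Schr\"odingerised system \eqref{generalSchr}, written as $\frac{\d}{\d t}\tilde{\bb{w}} = \i\bb{H}\tilde{\bb{w}}$ with the Hermitian generator $\bb{H} = -(H_1\otimes D_\mu)+(H_2\otimes I)$. The key point is that passing from \eqref{heatww} to \eqref{generalSchr} via $\tilde{\bb{w}}=(I_u\otimes F_p^{-1})\bb{w}$ replaces the dense momentum matrix $P_\mu$ by the diagonal $D_\mu$, so that $\bb{H}$ becomes sparse: since $D_\mu$ and $I$ each have a single nonzero per row, $s(\bb{H})\le s(H_1)+s(H_2)\sim s(A)$ by the sparsity relation recorded earlier. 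This change of basis is exactly the QFT on the $p$-register, applied once at the start and once at the end, which contributes the additive $\mathcal{O}(m_p\log m_p)$ term.

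Next I would estimate the max-norm. Using $\|H_1\otimes D_\mu\|_{\max}=\|H_1\|_{\max}\|D_\mu\|_{\max}$ together with the Nyquist bound $\|D_\mu\|_{\max}=\max_l|\mu_l|\sim 1/\Delta p$, and $\|H_2\otimes I\|_{\max}=\|H_2\|_{\max}$, the first term dominates as $\Delta p\to 0$, giving $\|\bb{H}\|_{\max}\lesssim \|A\|_{\max}/\Delta p$ (recall $\|H_1\|_{\max},\|H_2\|_{\max}\le\|A\|_{\max}$). Hence $\tau=s(\bb{H})\|\bb{H}\|_{\max}T\sim s(A)\|A\|_{\max}T/\Delta p$, and substituting $\tau$ and $m_H=m_d+m_p$ into Lemma~\ref{lem:complexityHamiltonian} yields $\mathcal{O}(\tau m_H\cdot\text{polylog})=(m_d+m_p)\,\widetilde{\mathcal{O}}(s(A)\|A\|_{\max}/\Delta p)$, where the evolution time $T$ and all polylogarithmic error factors are absorbed into $\widetilde{\mathcal{O}}$. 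Adding the two $p$-QFTs gives the first formula.

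For the special case I would abandon the generic sparse simulation and instead Trotterise $\bb{H}$ exactly as in the proof of Theorem~\ref{thm:heat}. When $H_1$ is diagonalised by the $x$-QFT $F_x$ and $H_2$ is diagonal, the term $H_1\otimes D_\mu$ becomes diagonal after conjugation by $F_x\otimes I$, while $H_2\otimes I$ is already diagonal; each factor is therefore implemented by a QFT on $x$ (cost $\mathcal{O}(dm\log m)$) surrounding a diagonal unitary implemented by Algorithm~I (cost $\mathcal{O}(m_H)$). Running the splitting over $T/\Delta t$ steps, and folding in the two $p$-QFTs with the same bookkeeping as in Theorem~\ref{thm:heat}, produces $T/\Delta t\cdot\mathcal{O}(dm\log m+m_p\log m_p)$.

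The main obstacle is the norm-and-sparsity bookkeeping of the first part: one must recognise that the diagonalisation in $p$ is precisely what renders $\bb{H}$ sparse (the momentum matrix $P_\mu$ itself is dense), multiply the max-norms correctly across the tensor factors, and justify $\|D_\mu\|_{\max}\sim 1/\Delta p$ through the Nyquist frequency, since this $1/\Delta p$ blow-up is what forces $\tau$, and hence the entire gate count, to scale with $1/\Delta p$. Matching the residual polylog factors coming out of Lemma~\ref{lem:complexityHamiltonian} to the $\widetilde{\mathcal{O}}$ notation is then routine.
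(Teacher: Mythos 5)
Your proposal is correct and follows essentially the same route as the paper: the general bound is obtained by feeding $\tau = s(\bb{H})\|\bb{H}\|_{\max}T$ into Lemma~\ref{lem:complexityHamiltonian} with the identical sparsity estimate $s(\bb{H})\lesssim s(H_1)+s(H_2)\sim s(A)$ and max-norm estimate $\|\bb{H}\|_{\max}\lesssim\|A\|_{\max}/\Delta p$ via $\|D_\mu\|_{\max}\lesssim 1/\Delta p$, and the special case is handled by the same Trotter-splitting argument as in Theorem~\ref{thm:heat}. Your explicit accounting of the two $p$-register QFTs as the source of the additive $\mathcal{O}(m_p\log m_p)$ term is a detail the paper's proof leaves implicit, but it does not change the argument.
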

\begin{proof}
1) The Hamiltonian in \eqref{generalSchr} can be simulated with
\begin{align*}
&  (m_d + m_p)  \widetilde{\mathcal{O}}( s( H_1\otimes D_\mu - H_2 \otimes I) \|H_1\otimes D_\mu - H_2 \otimes I\|_{\max} ) \\
& \le  (m_d + m_p)  \widetilde{\mathcal{O}}(s(H_1\otimes D_\mu)+s(H_2))(\|H_1 \otimes D_{\mu}\|_{\max}+\|H_2\|_{\max}) )\\
& \le (m_d + m_p)  \widetilde{\mathcal{O}}\Big( (s(H_1)+s(H_2))(\|H_1\|_{\max}/\Delta p+\|H_2\|_{\max}) \Big) \\
& \le (m_d + m_p)  \widetilde{\mathcal{O}}( s(A)\|A\|_{\max}/\Delta p ),
\end{align*}
where $\|D_\mu\|_{\max} \lesssim 1/\Delta p$ is used. 

2) For the special case, one can solve \eqref{generalSchr} by the first-order time splitting scheme.
The gate complexity can be obtained as in the proof of Theorem \ref{thm:heat}.
\end{proof}

The difference between this method and our approach in Section \ref{sect:Schrodingerisation} is that here we first discretise in space (hence $A$ is the difference matrix) and then use the warped phase transformation later, but in Section \ref{sect:Schrodingerisation} we use the warped phase transformation first and then discretise in space. Let us compare the two ways for the linear heat and convection problems.
\begin{itemize}
  \item For the heat equation, when applying the discrete Fourier transform on the space variables, one obtains
  \[\frac{\d}{\d t} \bb{u}(t) =  A \bb{u}(t) , \qquad A = -(\bb{P}_1^2 + \cdots + \bb{P}_d^2) + \bb{V} .   \]
  It is easy to find that \eqref{heatww} coincides with \eqref{heatw} since $H_1 = A$ and $H_2 = O$. That is, the two treatments are equivalent due to the fact that $A$ is a Hermitian matrix.
  \item For the convection equation, the discrete Fourier transform on the space variables gives
\[\frac{\d}{\d t} \bb{u}(t) - \i \sum\limits_{l=1}^d \bb{P}_l \bb{u}(t) = \bb{0}, \]
which corresponds to a special case $H_1 = O$. In this case, there is no need to apply the warped phase transformation since it is already a Hamiltonian system. Of course, one can still use the $\sin(p)$-transform to get a Schr\"odinger-type system.
\end{itemize}

In summary, this general method is exactly the Schr\"odingerisation, by simply combining what we did in the previous section. $H_2$ can either be handled directly, or using Schr\"odingerisation~-~the $\sin(p)$ idea in the previous section when $H_1 = O$.

\subsection{Parity-dilating unitarisation of linear ODEs}

In addition to transforming ODEs into Hamiltonian systems, one can also translate the evolution operator into products of unitary operators that are suitable for quantum simulation. Here we present and extend the unitary dilation technique (for instance in \cite{Javier2022optionprice} applied to the Black-Scholes equation) to more general cases. In the following we still consider the ODE system \eqref{ODElinear} and assume that $\bb{u}(t)$ has been encoded as a state vector $\ket{\psi(t)}$.

The solution of $|\psi(t)\rangle$ can be approximated by
\begin{align}
\ket{\psi(t)} = \e^{(H_1 + \i H_2) t}|\psi(t_0)\rangle \approx \prod_{j=1}^{N_t} (\e^{H_1 \Delta t} \e^{\i H_2\Delta t})_j \ket{\psi(t_0)},
\end{align}
with $N_t \Delta t = t$. Higher-order terms with commutation relations between $H_1$ and $H_2$ can be added later for higher accuracy. We can enlarge the matrix (or operator) to make unitaries out of each $\e^{H_1\Delta t}\e^{\i H_2 \Delta t}$ term. Since $\e^{H_1\Delta t}=: H_{\Delta t}$ is Hermitian, one can define a unitary dilation operator $\tilde{U}$ as
\begin{equation}\label{Udilation}
\tilde{U} := \begin{bmatrix}
    H_{\Delta t} & \sqrt{I-H_{\Delta t}^2} \\
    \sqrt{I-H_{\Delta t}^2} & -H_{\Delta t}
    \end{bmatrix}=(\sigma_z \otimes I)\e^{\i\sigma_y \otimes \arccos(H_{\Delta t})},
\end{equation}
which is a particular instance of the blocking-encoding of Hermitian matrices \cite{gilyen2019quantum}, where $I$ is the identity operator and
\[\sigma_y = \begin{bmatrix} 0 & -\i \\ \i & 0 \end{bmatrix}, \qquad \sigma_z = \begin{bmatrix} 1 & 0 \\ 0 & -1 \end{bmatrix}.\]
Note that when $H_1$ is negative semi-definite,  $\|H_{\Delta t}\| \le 1$ holds, which guarantees the well-definedness of $\tilde{U}$.

\begin{remark}
  The equal sign in \eqref{Udilation} can be understood as follows. Let
\[ A(\lambda) = \begin{bmatrix}
    \lambda & \sqrt{1-\lambda^2} \\
    \sqrt{1-\lambda^2} & -\lambda
    \end{bmatrix} = \begin{bmatrix}
    \cos t & \sin t \\
    \sin t & -\cos t
    \end{bmatrix} =: B(t), \qquad  \lambda = \cos t \in [-1,1].\]
Then the column vectors of $B(t)$ are the solutions of
\[\begin{cases}
\frac{\d u_1(t)}{\d t} = -u_2(t), \\
\frac{\d u_2(t)}{\d t} = u_1(t),
\end{cases} \]
whose evolution is given by
\[
B(t) = \exp \left(\begin{bmatrix}
    0 & -1 \\
    1 & 0
    \end{bmatrix}  t \right) B(0)  =   \e^{-\i \sigma_y t} B(0) = \e^{-\i \sigma_y t} \sigma_z
\quad \mbox{or} \quad
B (t) =   \sigma_z \e^{\i \sigma_y t}.\]
This leads to $ A(\lambda) =  \sigma_z \e^{\i \sigma_y \arccos \lambda}$ and hence the desired equality by letting $\lambda = H_{\Delta t}$.
\end{remark}

\bigskip

Let $\ket{0} = [1,0]^T$. Then we can define a new unitary operator $U$ acting on $\ket{0} \otimes \ket{\psi(t_0)} = [\ket{\psi(t_0)}; \bb{0}]$ as
\begin{align}
    U \begin{bmatrix} |\psi(t_0)\rangle \\
    0
    \end{bmatrix}
: = \tilde{U} (I \otimes \e^{\i H_2\Delta t})\begin{bmatrix} \ket{\psi(t_0)} \\
    \bb{0}
    \end{bmatrix}=\begin{bmatrix} H_{\Delta t} \e^{\i H_2 \Delta t} \ket{\psi(t_0)}\\
    \sqrt{I-H_{\Delta t}^2}\e^{\i H_2\Delta t} \ket{\psi(t_0)}\end{bmatrix}, \label{evolutionaryUdilation}
\end{align}
where $\bb{0}$ has the same size with $\ket{\psi(t_0)}$. For convenience, we refer the new operator $U$ as the (first-order) evolutionary unitary dilation operator.
Now if proceed to use this state directly to the next time step and multiply by another $U$, one does not recover a term $(H_{\Delta t} \e^{\i H_2 \Delta t})^2 \ket{\psi(t_0)}$ in the top entry, but rather it is augmented by another term $(\sqrt{I-H_{\Delta t}^2}\e^{\i H_2\Delta t} )^2 \ket{\psi(t_0)}$. This means we cannot post-select at the end, but only at every time step. For instance, in the second step the quantum state should be post-selected as
\[\begin{bmatrix} H_{\Delta t} \e^{\i H_2 \Delta t} \ket{\psi(t_0)}\\
    \bb{0}\end{bmatrix}
= \begin{bmatrix}  \ket{\psi(t_1)}\\
    \bb{0}\end{bmatrix} = \ket{0} \otimes \ket{\psi(t_1)},\]
and hence we can repeat the procedure. However, if the success probability of every time step to obtain the top entry is $P$, then after $N_t$ time-steps, a single success requires $\mathcal{O}(P^{-N_t})$ copies of $|\psi(t_0)\rangle$ at the beginning. A simple case is that $H_1$ commutes with $H_2$ as for the Black-Scholes equation \cite{{Javier2022optionprice}}. In this situation,
\[\e^{A t} \ket{\psi(t_0)} = \e^{\i H_2 t} \e^{H_1 t} \ket{\psi(t_0)},\]
and one can first apply the unitary dilation technique to $\e^{H_1 t}$ and perform one post-selection at the end of the simulation of $H_1$.

However, $H_1$ and $H_2$ do not commute in general. We go for an alternative approach, where we instead begin with the state $|0\rangle^{\otimes N_t^*} \otimes \ket{\psi(t_0)}$, where $N_t^* = \log N_t$. In this case we only need to perform one post-selection at the very end. Observing that
\[\begin{bmatrix}  \ket{\psi(t_0)}\\
    \bb{0}\end{bmatrix} \xrightarrow{U} \begin{bmatrix}  \ket{\psi(t_1)}\\
    * \end{bmatrix},\]
we can introduce the following translations
\[\begin{bmatrix}  \ket{\psi(t_0)}\\
\bb{0} \\
\bb{0} \\
\vdots \\
\bb{0}  \\
\bb{0}
\end{bmatrix} \xrightarrow{U_1} \begin{bmatrix}  \ket{\psi(t_1)}\\
* \\
\bb{0} \\
\vdots \\
\bb{0}  \\
\bb{0}
\end{bmatrix}
\xrightarrow{U_2} \begin{bmatrix}  \ket{\psi(t_2)}\\
* \\
* \\
\vdots \\
\bb{0}  \\
\bb{0}
\end{bmatrix} \xrightarrow{U_3} \cdots ,\]
where the vector has $N_t$ zero vectors, and each unitary $U_j$ is basically the evolutionary unitary dilation operator $U$ in Eq.~\eqref{evolutionaryUdilation}, but applied to different {\it pairs} of qubit systems.  For this reason, we refer to it as the parity-dilating unitarisation approach in this article. The costs are identical for all $j$, and the explicit formula of $U_j$ can be written as
\begin{align*}
U_j
& := |0\rangle \langle 0| \otimes \e^{\i H_2 \Delta t}H_{\Delta t}+(|j\rangle \langle 0|+|0\rangle \langle j|)\otimes \e^{\i H_2 \Delta t}\sqrt{I-H_{\Delta t}^2} \\
& \qquad -|j\rangle \langle j|\otimes \e^{\i H_2 \Delta t}H_{\Delta t}+\sum_{k \neq 0, j}^{N_t-1}|k\rangle \langle k|\otimes I.
\end{align*}
This means if assume access to $U_j$ (as an oracle), then one needs to apply this oracle $N_t$ times to the state $|0\rangle^{\otimes N_t^*} \otimes |\psi(t_0)\rangle$ and then just to perform a single post-selection at the end. However, we note that this assumption is {\it not} equivalent to the usual assumption of unitary decomposition, where we decompose into single qubit gates and CNOTs or some other combination.

\begin{theorem}\label{thm:unitarisation}
Assume sparse access to $\arccos(H_1 \Delta t)$ with sufficient precision and suppose that the time step satisfies $\|A\|_1 \Delta t \le 1$. Then the first-order parity-dilation unitarisation method has gate complexity
\begin{align*}
N_{\text{Gates,Unitary}}
 = \log N_A \cdot \widetilde{\mathcal{O}}\Big( T/\Delta t \cdot  s(\arccos(H_1 \Delta t)) +  s(A) \|A\|_{\max} \Big) .
\end{align*}
In particular, if $H_1$ can be diagonalised in the momentum basis and $H_2$ is a diagonal matrix,  and $\|H_1\| \Delta t \le 1$, then the parity-dilation unitarisation method can be simulated with
\[N_{\text{Gates,Unitary}} = T/\Delta t \cdot \mathcal{O}(d m \log m).\]
\end{theorem}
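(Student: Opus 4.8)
The plan is to realise the non-unitary evolution $\e^{At} = \e^{(H_1 + \i H_2)t}$ as a product of unitaries on a dilated space, and then to count the gates for one application of the building block $U_j$ together with the number of such applications. First I would invoke the first-order Trotter splitting already displayed above, $\e^{A\Delta t} \approx \e^{H_1\Delta t}\e^{\i H_2\Delta t} = H_{\Delta t}\e^{\i H_2\Delta t}$, whose local error is controlled by $[H_1,H_2]$ and is $\mathcal{O}(\Delta t^2)$ per step; the hypothesis $\|A\|_1\Delta t \le 1$ guarantees both that the splitting is accurate after $N_t = T/\Delta t$ steps and, since $H_1$ is negative semi-definite, that $H_{\Delta t} = \e^{H_1\Delta t}$ is a contraction with $\|H_{\Delta t}\|\le 1$, so that $\sqrt{I-H_{\Delta t}^2}$ and hence the dilation $\tilde U$ of \eqref{Udilation} are well defined.

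Next I would account for the two factors of each step separately and implement each by the sparse Hamiltonian simulation of Lemma~\ref{lem:complexityHamiltonian} (Algorithm~II). The genuinely unitary factor $\e^{\i H_2\Delta t}$ is simulated directly: on $m_H = \log N_A$ qubits, using $s(H_2)\le 2s(A)$ and $\|H_2\|_{\max}\le\|A\|_{\max}$, the $N_t$ factors aggregate to the simulation of $H_2$ over the evolution and contribute $\widetilde{\mathcal{O}}(s(A)\|A\|_{\max})$ queries, each of cost $\widetilde{\mathcal{O}}(\log N_A)$ gates. The contraction factor $H_{\Delta t}$ is handled through the identity $\tilde U = (\sigma_z\otimes I)\e^{\i\sigma_y\otimes\arccos(H_{\Delta t})}$: given the assumed sparse access to $\arccos(H_1\Delta t)$, the Hamiltonian $\sigma_y\otimes\arccos(H_{\Delta t})$ is simulated for unit time, and since the eigenvalues of $H_{\Delta t}$ lie in $(0,1]$ one has $\|\arccos(H_{\Delta t})\|_{\max}=\mathcal{O}(1)$, so one application costs $\widetilde{\mathcal{O}}(s(\arccos(H_1\Delta t)))$ queries and, over $N_t=T/\Delta t$ steps, $\widetilde{\mathcal{O}}(T/\Delta t\cdot s(\arccos(H_1\Delta t)))$ queries, each again $\widetilde{\mathcal{O}}(\log N_A)$ gates. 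Adding the two contributions and factoring out $\log N_A$ yields the stated general bound.

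The step I expect to be the main obstacle is the correctness of the parity-dilating structure, rather than the counting. I must verify that applying $U_1,\dots,U_{N_t}$ to $\ket{0}^{\otimes N_t^*}\otimes\ket{\psi(t_0)}$, with each $U_j$ the evolutionary dilation \eqref{evolutionaryUdilation} acting on a distinct ancilla pair as in the explicit formula for $U_j$ above, routes every leakage term $\sqrt{I-H_{\Delta t}^2}\,(\cdot)$ into a fresh sector labelled by $\ket{j}$, so that the all-ancilla-zero component of the final state equals $\prod_{j}(H_{\Delta t}\e^{\i H_2\Delta t})\ket{\psi(t_0)}$ up to the Trotter error. This is exactly what permits a single post-selection at the very end instead of the exponentially costly $\mathcal{O}(P^{-N_t})$ per-step post-selection; the bookkeeping of the $\log N_t$ ancilla register and the block structure of the $U_j$ is the delicate part, and I would also record, as the text does, that the bound is stated under sparse access to $U_j$, which is not the standard single-qubit/CNOT decomposition assumption.

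Finally, for the special case I would specialise the two factors. If $H_1$ is diagonalised in the momentum basis by $F_x$, then conjugation by the QFT costs $\mathcal{O}(dm\log m)$ gates (there is no $p$-register in this approach) and reduces $\arccos(H_{\Delta t})$ to the $\arccos$ of a diagonal matrix, whose associated $\e^{\i\sigma_y\otimes(\cdot)}$ is a one-ancilla controlled rotation implementable with $\mathcal{O}(m_H)$ gates in the style of Algorithm~I (Remark~\ref{rem:algI}); a diagonal $H_2$ makes $\e^{\i H_2\Delta t}$ directly implementable with $\mathcal{O}(m_H)$ gates. Hence each step costs $\mathcal{O}(dm\log m + m_H)=\mathcal{O}(dm\log m)$, and over $N_t=T/\Delta t$ steps one obtains $T/\Delta t\cdot\mathcal{O}(dm\log m)$, as claimed.
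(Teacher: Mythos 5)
Your proposal is correct and follows essentially the same route as the paper: write one step as $U=(\sigma_z\otimes I)\e^{\i\sigma_y\otimes\arccos(H_1\Delta t)}(I\otimes\e^{\i H_2\Delta t})$, apply Lemma~\ref{lem:complexityHamiltonian} to each factor with $\|\arccos(\cdot)\|_{\max}=\mathcal{O}(1)$ (the paper gets this from the Taylor series of $\arccos$ under $\|A\|_1\Delta t\le 1$, you from the spectrum of the contraction $H_{\Delta t}$ --- equivalent reasoning), multiply by $T/\Delta t$ steps, and in the special case replace Algorithm~II by QFT conjugation plus diagonal unitaries at $\mathcal{O}(dm\log m)$ per step. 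Your extra care about the parity-dilation bookkeeping and the single final post-selection is material the paper places in the discussion preceding the theorem rather than in the proof itself, but it does not change the argument.
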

\begin{proof}
1) The first-order evolutionary unitary dilation operator is
\[U = (\sigma_z \otimes I) \e^{\i\sigma_y \otimes \arccos(H_1 \Delta t)}  (I \otimes \e^{\i H_2\Delta t}).\]
By Lemma \ref{lem:complexityHamiltonian}, the operators $\e^{\i\sigma_y \otimes \arccos(H_1 \Delta t)}$ and $\e^{\i H_2\Delta t}$ can be simulated with
\[
  \widetilde{\mathcal{O}}( m_d  s(\arccos(H_1 \Delta t)) \|\arccos(H_1 \Delta t)\|_{\max} )
+  \widetilde{\mathcal{O}}(\Delta t m_d s(H_2) \|H_2\|_{\max} )
\]
2-qubits gates (note that $\Delta t$ is not a factor in the first term), where  $\arccos(H_1 \Delta t)$ is defined according to the Taylor expansion
\[\arccos x = \frac{1}{2} \pi - x - \frac{1}{6} x^3 - \frac{3}{40}x^5 - \frac{5}{112}x^7 - \frac{35}{1152}x^9 - \cdots,\]
which is well-defined when $|x|\le 1$. One easily finds that $\arccos(H_1 \Delta t)$ is well-defined when the time step satisfies $\|H_1\|_1 \Delta t \le 1$. In fact, since $H_1$ is Hermitian, there exists a unitary matrix $V$ such that $V^{-1} H_1 V = \Lambda_1$, where $\Lambda_1$ is the diagonal matrix consisting of the eigenvalues of $H_1$. By definition,
\begin{equation}\label{diagonalisation}
\arccos(H_1 \Delta t) = T \arccos(\Lambda_1 \Delta t) T^{-1},
\end{equation}
where $\arccos(\Lambda_1 \Delta t)$ is obviously well-defined due to the fact that $|\lambda(H_1) |_{\max} \Delta t \le \|H_1\|_1 \Delta t \le \|A\|_1 \Delta t \le 1$. Therefore,
\begin{align*}
\|\arccos(H_1 \Delta t)\|_{\max}
& \le \|\arccos(H_1 \Delta t)\|_1  \le \frac{1}{2} \pi + \|H_1 \Delta t\|_1 + \frac{1}{6} \|H_1 \Delta t\|_1^3 + \cdots \\
& \le \frac{1}{2} \pi + 1 + \frac{1}{6} + \frac{3}{40} + \cdots  = \arccos(-1) = \pi.
\end{align*}

2) For the special case, the first-order evolutionary unitary dilation operator is
\[
U=(\sigma_z \otimes I) (I \otimes \Phi^{\otimes^d} ) \e^{\i \sigma_y \otimes \arccos (\Lambda_1 \Delta t)}(I \otimes (\Phi^{\otimes^d} )^{-1} ) ( I  \otimes \e^{\i H_2 \Delta t} ) ,
\]
where $\Lambda_1$ and $H_2$ are diagonal matrices, and $\arccos (\Lambda_1 \Delta t)$ is well-defined if $\|H_1\| \Delta t \le 1$. The gate complexity is obviously given by
\begin{align*}
N_{\text{Gates,Unitary}}
  = n (2 \mathcal{O}(d m \log m)  + 2 \mathcal{O}(d m) ) = t/\Delta t \cdot \mathcal{O}(d m \log m).
\end{align*}
This completes the proof.
\end{proof}

%\begin{remark} \label{rem:sarc} \color{blue}
%It is simple to find that
%\[s(A+B) \le s(A) + s(B) \qquad  \mbox{and} \qquad s(A^k) \le 2^{k-1} s(A).\]
%This leads to
%\[s(\arccos (H_1\Delta t) ) = 1 + s(H_1 \Delta t)(2^2 + 2^4 + 2^6 + \cdots) \le 1 + s(A)(2^3 + 2^5 + 2^7 + \cdots),\]
%which implies that the higher order truncation gives a more dense matrix in general.
%\end{remark}

\begin{remark}
Note that for $\arccos(H_1 \Delta t)$ to be well-defined, one needs $\|H_1\| \Delta t \le 1$. If $H_1$ is the discrete heat operator, then $\|H_1\|= \mathcal{O}(\Delta x^2)$ in one dimension, hence one needs $\Delta t = \mathcal{O}(\Delta x^2)$, which is the CFL stability condition for solving the heat equation using explicit time discretisation. On the other hand, the time splitting method is unconditionally stable, thus one can take $\Delta t = \mathcal{O}(\Delta x)$ there.
According to the special case in Theorems \ref{thm:Schrodingerisation} and \ref{thm:unitarisation}, the heat equation \eqref{heateq} can be simulated with
\begin{align*}
& N_{\text{Gates,Schr}} =  T/\Delta t \cdot \mathcal{O}( d m \log m + m_p \log m_p) = T/ \Delta x\cdot \mathcal{O}( d m \log m + m_p \log m_p),\\
& N_{\text{Gates,Unitary}} = T/\Delta t \cdot \mathcal{O}(d m \log m) = T /\Delta x^2 \cdot \mathcal{O}(d m \log m).
\end{align*}
The mesh strategy is $d\Delta x^\ell \sim \varepsilon$ and $\Delta p \sim \varepsilon$, which gives
\[\frac{N_{\text{Gates,Schr}}}{N_{\text{Gates,Unitary}}} = \Delta x \cdot \mathcal{O}\Big(  1 + \frac{\ell}{d} \frac{\log (1/\varepsilon)}{\log(d/\varepsilon)} \Big).\]
This implies that the parity-dilating unitarisation method requires more computational cost if $\ell$ is not large, that is, the solution $u$ is not sufficiently smooth.
\end{remark}

%As pointed out before, the parity-dilating unitarisation method usually takes more time for the special cases where one of the matrices can be diagonalised in the momentum basis and the other one is diagonal. {\color{red} It should be pointed out that the cost of our method may be slightly overhead than the unitarisation method when the two operators can be diagonalised in the same basis as observed in the simulation of the Black-Scholes equation (see Subsection \ref{subsect:BlackScholes})}.

In the following, we focus on distinguishing the parity-dilating unitarisation method with our Schr\"odingerisation one with respect to the sparsity.

For spectral discretisations, the coefficient matrix arising from the differential operators is always dense in the original variables due to the existence of the DFT matrix. It may be hard to get a sparse system when the equations have varying coefficients. For a dense matrix, the preparation of the matrix $\arccos(H_1 \Delta t)$ seems to be rather involved. A candidate for the implementation is the Taylor expansion, which, however, cannot resolve the sparsity problem.
%and may destroy lots of important physical properties of the original operator. It also makes the quantum implementation more difficult and probably with added extra cost.

The finite difference discretisation usually yields sparse systems but with low accuracy than the spectral method. The matrix $\arccos(H_1 \Delta t)$ is often dense even if $H_1$ is very sparse. This is like the transformation in DFT, which makes the diagonal matrix a dense one (see \eqref{diagonalisation}).  This means $s(\arccos(H_1 \Delta t))$ usually scales as $\mathcal{O}(\Delta x^{-d})$, which can be far greater than $s(H_1) \|H_1\|_{\max}/(\Delta p)$ when $d$ is large (the later one may scale as $\mathcal{O}(d^2/(\Delta x^\alpha \Delta p))$). On the other hand, the density makes the implementation of $\arccos(H_1 \Delta t)$ difficult.
 It should be pointed out that for some cases, for instance, the central difference of the Laplacian, the matrix can be diagonalised in the discrete Fourier, discrete sine or discrete cosine transformation matrix. In this case, the $\arccos(H_1 \Delta t)$ can be efficiently implemented since the transformation matrices can be realised by the fast Fourier transform.

\subsection{Block-encoding unitarisation of linear ODEs}

The parity-dilation method shows a particular example of a block-encoding strategy. More general results can be derived. In Section 4.1 of \cite{An2022blockEncodingODE},  the authors presented the block-encoding technique to solve \eqref{ODElinear} with negative definite $A$ and time-independent $\bb{b}$, where $\|A\| \le 1$ is assumed for technical simplicity.  Based on the following evolutionary form
\[\bb{u}(t) = \e^{A t} \bb{u}(0)  +  \int_0^T \e^{A(T-s)} \d s\bb{b} = \e^{A t} \bb{u}(0) + (\e^{AT} - I) A^{-1} \bb{b},\]
the algorithm first separately computes the homogeneous and the inhomogeneous parts and then combines them together using the technique of linear combination of quantum states. For construction of the linear combination, we refer the reader to Lemma 22 there when given the block-encodings of $\e^{A t}$ and $\int_0^T \e^{A(T-s)} \d s$.

The query complexity of the block-encoding technique in \cite{An2022blockEncodingODE} is based on access to the block-encoding of matrix $A$, which can be difficult to realise. If we assume access to the block-encoding of $A$, then the complexity is dependent of the condition number of $A$ (i.e., the inverse of $\delta$ in Theorem 23), which could give rise to a large factor. This is not the case for the Hamiltonian simulation based algorithms. However, if access to the block-encoding of $A^2$ is assumed, this factor is no longer present.

\subsection{Imaginary time evolution methods for linear ODEs}
Imaginary time evolution methods refers to the application of a Wick rotation $\tau = -\i t$ to transform the heat equation $\partial_t u = \partial_{xx}u$ into a Schr\"odinger equation $-\i \partial_\tau u(\tau, x) = \partial_{xx} u(\tau,x)$. This has been applied  in heuristic schemes for quantum problems for instance in \cite{motta2020determining, mcardle2019variational, seki2021quantum}. In this case, the state obeying Schr\"odinger's equation and its imaginary time counterpart do not have the same evolution. This means that, except for  the steady state solution, extra resources are necessary to map between the solution in the unitarily evolving system to the other. Usually some heuristic techniques are used, like variational methods or classical optimisation that requires input from quantum measurements on the quantum states themselves. These, however, have the benefit of being implementable on hybrid classical-quantum devices.

The imaginary time evolution approach can also be used to deal with the general linear ODE system \eqref{ODElinear} in principle. However, it may change the nature of the underlying PDEs. For example, if it is done for the heat equation, then one changes from  a dissipative equation~-~which has the merit of converging fast (exponentially) to the ground state~-to a Hamiltonian system that converges slowly to the ground or steady state.

This differs from our method, which does not rely on heuristic methods. Here we instead transform the heat equation to phase space by using the warped phase transformation. We can also apply this to ground state or steady state preparation \cite{Schrshort}.

\section{Applications to more PDEs} \label{sec:applications}

In this section, we apply our method to more typical examples. For simplicity, we only provide the time complexity in terms of the number of qubits since the error estimates may be rather involved.

\subsection{The Black-Scholes equation}\label{subsect:BlackScholes}

The Black-Scholes equation
\[\frac{\partial{V}}{\partial t} + r S \frac{\partial{V}}{\partial S} + \frac12 \sigma^2S^2 \frac{\partial{V}^2}{\partial S^2} = rV,\]
is a PDE that evaluates the price of a financial derivative, where $r$ and $\sigma$ are constants. For a specific derivative contract, the problem is to determine its present price $V(t=0,S)$ according to the terminal price $V(t = T, S )$ of the option \cite{Javier2022optionprice}. The change of variables $S = \e^x$, $-\infty < x < \infty$ leads to a backward parabolic equation
\[\frac{\partial{V}}{\partial t} + (r-\frac{\sigma^2}{2})\frac{\partial{V}}{\partial x} + \frac{\sigma^2}{2} \frac{\partial{V}^2}{\partial x^2} = rV.\]
One can reverse time $t \to \tau = T-t$ to get a forward parabolic equation
\begin{equation}\label{BS}
\frac{\partial{V}}{\partial \tau} = (r-\frac{\sigma^2}{2})\frac{\partial{V}}{\partial x} + \frac{\sigma^2}{2} \frac{\partial{V}^2}{\partial x^2} - rV.
\end{equation}
This is a typical example, in which the underlying operators can be diagonalised in the momentum basis, that has been resolved by the unitarisation approach in \cite{Javier2022optionprice}.

We first consider the Schr\"odingerisation method. By the warped phase transformation $W(t,x,p) = \e^{-p}V(t,x)$ with periodic extension of the initial data, one gets
\[\partial_{\tau} W = (r-\frac{\sigma^2}{2})\partial_x W + (\frac{\sigma^2}{2} \partial_{xx} - rI)(- \partial_p W).\]
By repeating the previous calculations,  it is straightforward to derive a Hamiltonian system
\[\frac{\d}{\d t} \tilde{\bb{W}}(t) = \i \bb{H} \tilde{\bb{W}}(t), \qquad \bb{H} = (r-\frac{\sigma^2}{2})(P_\mu \otimes I) - (\frac{\sigma^2}{2}P_\mu^2 + rI) \otimes D_\mu,\]
where $\tilde{\bb{W}} = (I \otimes F_p^{-1}) \bb{W}$ and $\bb{W}(t) = \sum_{j, k} W(t,x_j,p_k) \ket{j,k}$.
Performing the change of variables $\tilde{\bb{c}} = (F_x^{-1} \otimes I) \tilde{\bb{W}}$, where $F_x = \Phi$, one has
\[\frac{\d}{\d t} \tilde{\bb{c}}(t) = \i \tilde{\bb{H}} \tilde{\bb{c}}(t), \qquad \tilde{\bb{H}} = (r-\frac{\sigma^2}{2})(D_\mu \otimes I) - (\frac{\sigma^2}{2}D_\mu^2 + rI) \otimes D_\mu.\]

\begin{theorem}
The Schr\"odingerisation approach for the Black-Scholes equation can be simulated with gate complexity given by
\[N_{\text{Gates,Schr}} = \mathcal{O}( m \log m + m_p \log m_p).\]
\end{theorem}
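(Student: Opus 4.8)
The plan is to exploit the fact that, because the coefficients $r$ and $\sigma$ are constants, the Hamiltonian $\tilde{\bb{H}}$ displayed just above the statement is a \emph{diagonal} matrix. Indeed $D_\mu$ is diagonal, and each of the summands composing $\tilde{\bb{H}}$ --- namely $D_\mu \otimes I$, $D_\mu^2 \otimes D_\mu$, and $I \otimes D_\mu$ --- is a tensor product of diagonal matrices and hence diagonal, so their linear combination is diagonal as well. This is precisely the structural feature absent in the variable-coefficient heat equation of Theorem~\ref{thm:heat}, where the potential $V(x)$ is not diagonal in the momentum basis and forces a Trotter/time-splitting iteration; here no such splitting is needed.

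Since $\tilde{\bb{H}}$ is diagonal and time-independent, the exact solution is the single diagonal exponential $\tilde{\bb{c}}(T) = \e^{\i \tilde{\bb{H}} T}\tilde{\bb{c}}(0)$, applied once for the whole evolution time. First I would prepare the initial state $\bb{W}^0 = \sum_{j,k} W(0,x_j,p_k)\ket{j,k}$ and apply the inverse QFT on the $x$-register and on the $p$-register, realising the change of variables $\tilde{\bb{c}} = (F_x^{-1} \otimes F_p^{-1})\bb{W}$ that brings us into the frame where $\tilde{\bb{H}}$ is diagonal. Then apply the diagonal unitary $\e^{\i \tilde{\bb{H}} T}$, and finally the forward QFTs to return to the computational frame, from which the price $V$ is recovered along the $p$-direction via \eqref{point} or \eqref{integration}.

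For the gate count I would argue exactly as in Theorems~\ref{thm:heat} and~\ref{thm:hyper1}. The QFT on the single spatial variable $x$ costs $\mathcal{O}(m \log m)$ and the QFT on the auxiliary variable $p$ costs $\mathcal{O}(m_p \log m_p)$, each applied a constant number of times. The diagonal unitary $\e^{\i \tilde{\bb{H}} T}$ is implemented by Algorithm~I (Remark~\ref{rem:algI}) using only $\mathcal{O}(m_H) = \mathcal{O}(m + m_p)$ gates, which is dominated by the transform cost. Summing gives $N_{\text{Gates,Schr}} = \mathcal{O}(m \log m + m_p \log m_p)$, with no $T/\Delta t$ factor --- mirroring the convection result of Theorem~\ref{thm:hyper1} with one spatial dimension plus an extra $p$-register.

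I do not expect a serious obstacle: once diagonality is established the argument is essentially bookkeeping. The one point requiring care is the justification that no time-stepping is incurred, which rests entirely on $\tilde{\bb{H}}$ being simultaneously diagonal in the $x$- and $p$-momentum bases --- a consequence of the constant coefficients. A secondary check is that the warped phase transformation of Subsect.~\ref{subsect:idea} genuinely applies to the forward parabolic equation \eqref{BS}: the dissipative operator $\frac{\sigma^2}{2}\partial_{xx} - rI$ must play the role of $H_1$ and be negative semi-definite, while the convection term $(r-\frac{\sigma^2}{2})\partial_x$ supplies the anti-Hermitian part that becomes the $D_\mu \otimes I$ contribution directly, without invoking the auxiliary $p$-variable.
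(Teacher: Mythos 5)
Your proposal is correct and follows essentially the same route as the paper: establish that $\tilde{\bb{H}}$ is diagonal (so a single application of the diagonal unitary $\e^{\i \tilde{\bb{H}} T}$ via Algorithm~I costs $\mathcal{O}(m_H)=\mathcal{O}(m+m_p)$ gates), and add the cost of the QFTs on the $x$- and $p$-registers, each performed a constant number of times, giving $\mathcal{O}(m\log m + m_p\log m_p)$. The paper's proof is just a terser version of this same bookkeeping.
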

\begin{proof}
The diagonal unitary operator $\e^{-\i \tilde{\bb{H}} t}$ can be simulated with
 \[N_{\text{Gates}}(\e^{\i \tilde{\bb{H}} t}) = \mathcal{O}(m_H) = \mathcal{O}(m + m_p) \]
2-qubits gates  (see Algorithm I in Remark \ref{rem:algI}). The result follows by adding the number of gates for the QFT, which is only performed twice for both $x$ and $p$.
\end{proof}

For the unitarisation method, we rewrite the equation \eqref{BS} as
\[\partial_\tau V = \hat{H}_{\text{BS}} V,\]
where $\hat{H}_{\text{BS}} = \hat{H}_1 + \hat{H}_2$, with
\[\hat{H}_1 V = - \i (r-\frac{\sigma^2}{2})\partial_x V \qquad \mbox{and} \qquad \hat{H}_2 V =  \frac{\sigma^2}{2} \partial_{xx} V - rV\]
representing the Hermitian and non-Hermitian parts, respectively. Using the DFT, the above equation can be written as
\[\partial_\tau \bb{V} = (\bb{H}_1 + \i  \bb{H}_2) \bb{V},\]
with
\[ \bb{V}(t) = \sum\limits_j V(t,x_j) \ket{j}, \qquad \bb{H}_1 = (r-\frac{\sigma^2}{2})P_\mu,  \qquad \bb{H}_2 =  - (\frac{\sigma^2}{2}P_\mu^2 + rI). \]
Let $\tilde{\bb{V}} = F_x^{-1} \bb{V}$. The equation can be rewritten as the diagonalised form in the momentum space:
\[\partial_\tau \tilde{\bb{V}} = (\tilde{\bb{H}}_1 + \i  \tilde{\bb{H}}_2) \tilde{\bb{V}}, \qquad
\tilde{\bb{H}}_1 = (r-\frac{\sigma^2}{2})D_\mu,  \qquad \tilde{\bb{H}}_2 =  - (\frac{\sigma^2}{2}D_\mu^2 + rI).\]

\begin{theorem}\label{thm:unitarisationBS}
The parity-dilating unitarisation approach for the Black-Scholes equation can be simulated with gate complexity given by
\[N_{\text{Gates,Unitary}} = \mathcal{O}( m \log m).\]
\end{theorem}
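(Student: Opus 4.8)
The plan is to exploit a feature special to the Black--Scholes generator that is absent in the general situation of Theorem~\ref{thm:unitarisation}: after the discrete Fourier transform both pieces of the generator become diagonal in the \emph{same} (momentum) basis and hence commute. In the momentum representation the two parts are $\tilde{\bb{H}}_1 = (r-\tfrac{\sigma^2}{2})D_\mu$ and $\tilde{\bb{H}}_2 = -(\tfrac{\sigma^2}{2}D_\mu^2 + rI)$, both diagonal, so $\tilde{\bb{H}}_1\tilde{\bb{H}}_2 = \tilde{\bb{H}}_2\tilde{\bb{H}}_1$. Writing the momentum-space generator as $\tilde A = \i\tilde{\bb{H}}_1 + \tilde{\bb{H}}_2$, with $\i\tilde{\bb{H}}_1$ anti-Hermitian and $\tilde{\bb{H}}_2$ Hermitian and negative (semi-)definite for $r\ge 0$, commutativity yields the \emph{exact} factorisation
\[\e^{\tilde A T} = \e^{\tilde{\bb{H}}_2 T}\,\e^{\i\tilde{\bb{H}}_1 T},\]
with no Trotter error. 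As already noted for commuting parts after \eqref{evolutionaryUdilation}, the product over $N_t = T/\Delta t$ sub-steps now collapses to a single step, which is precisely what removes the $T/\Delta t$ prefactor present in Theorem~\ref{thm:unitarisation}.

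Next I would implement the two factors through a single Fourier conjugation. Transforming the encoded initial datum to the momentum register once by $F_x^{-1}$, the advection factor $\e^{\i\tilde{\bb{H}}_1 T}$ is a diagonal unitary realised by Algorithm~I (Remark~\ref{rem:algI}) with $\mathcal{O}(m)$ gates. For the dissipative factor I would dilate the \emph{exact} exponential $H_T := \e^{\tilde{\bb{H}}_2 T}$ rather than a one-step increment: following \eqref{Udilation}, set $\tilde U = (\sigma_z\otimes I)\,\e^{\i\sigma_y\otimes\arccos(H_T)}$, which is well defined for \emph{every} $T$, since the eigenvalues of $H_T$ are $\e^{-(\frac{\sigma^2}{2}\mu_l^2 + r)T}\in(0,1]$ and hence $\arccos(H_T)$ always exists --- in contrast to Theorem~\ref{thm:unitarisation}, where dilating the increment forces $\|H_1\|\Delta t\le 1$ and thus time stepping. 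Because $H_T$ is diagonal in the momentum basis, $\arccos(H_T)$ is diagonal too, so $\e^{\i\sigma_y\otimes\arccos(H_T)}$ is a uniformly $R_y$-controlled rotation whose angles $\arccos\!\big(\e^{-(\frac{\sigma^2}{2}\mu_l^2 + r)T}\big)$ are efficiently computable functions of the index and is therefore implementable with $\mathcal{O}(m)$ gates in the spirit of Algorithm~I.

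Collecting the costs, the whole evolution is a single application of
\[U = (\sigma_z\otimes I)(I\otimes F_x)\,\e^{\i\sigma_y\otimes\arccos(H_T)}\,(I\otimes\e^{\i\tilde{\bb{H}}_1 T})\,(I\otimes F_x^{-1})\]
to $\ket{0}\otimes\ket{\psi(0)}$, with one post-selection at the very end returning $\e^{\tilde{\bb{H}}_2 T}\e^{\i\tilde{\bb{H}}_1 T}\ket{\psi(0)}$. Here the two one-dimensional quantum Fourier transforms contribute $\mathcal{O}(m\log m)$ gates each and dominate, the two diagonal operators contribute $\mathcal{O}(m)$, and there is no $T/\Delta t$ factor, giving $N_{\text{Gates,Unitary}} = \mathcal{O}(m\log m)$ as claimed.

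I expect the main obstacle to be the justification that one may dilate the full-time contraction $\e^{\tilde{\bb{H}}_2 T}$ in a single shot, which rests jointly on commutativity (so the factorisation is exact and a single post-selection suffices), on negative semidefiniteness (so $\|\e^{\tilde{\bb{H}}_2 T}\|\le 1$ and the $\arccos$ is well defined independently of $T$), and on diagonality (so both the exponential and its $\arccos$ are cheap). A secondary point to verify carefully is that the momentum-diagonal rotation $\e^{\i\sigma_y\otimes\arccos(H_T)}$ genuinely falls within the $\mathcal{O}(m)$ budget of Algorithm~I, i.e.\ that its state-dependent angles can be loaded by quantum arithmetic at polylogarithmic overhead rather than by an exponentially large multiplexer.
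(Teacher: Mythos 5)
Your proposal is correct and follows essentially the same route as the paper's proof: commutativity of the two momentum-diagonal pieces gives an exact factorisation, so a single unitary dilation of the dissipative factor (with one post-selection) suffices, the diagonal operators cost $\mathcal{O}(m)$ via Algorithm I, and the quantum Fourier transforms dominate at $\mathcal{O}(m\log m)$. Your version is slightly more careful in dilating the full-time contraction $\e^{\tilde{\bb{H}}_2 T}$ (consistent with the definition in \eqref{Udilation}) rather than the linearised $\arccos(\tilde{\bb{H}}\tau)$ written in the paper's proof, but this does not change the argument or the complexity.
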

\begin{proof}
Since $\tilde{\bb{H}}_1$ commutes with $\tilde{\bb{H}}_2$, the evolution of $\tilde{\bb{V}}$ can be simply written as
\[\tilde{\bb{V}}(\tau) = \e^{(\tilde{\bb{H}}_1 + \i  \tilde{\bb{H}}_2)\tau }\tilde{\bb{V}}(0)
= \e^{\tilde{\bb{H}}_1 \tau}  \e^{\i  \tilde{\bb{H}}_2 \tau }\tilde{\bb{V}}(0).\]
Thus one just needs to perform the unitary dilation technique once.  According to the proof of Theorem \ref{thm:unitarisation}, the evolutionary unitary dilation operator is
\[U = (\sigma_z \otimes \mathrm{I}) \e^{\i\sigma_y \otimes \arccos(\tilde{\bb{H}}_1 \tau)}  (\mathrm{I} \otimes \e^{\i \tilde{\bb{H}}_2 \tau}),\]
where the diagonal unitary operators on the right-hand side can be simulated with $\mathcal{O}(m+1)$ gates. The result follows by adding the number of gates for the QFT.
\end{proof}

For this simple example, there is a slight overhead in time complexity for the  Schr\"odingersation approach, i.e.,
\[N_{\text{Gates,Schr}} - N_{\text{Gates,Unitary}} = \mathcal{O}(m_p \log m_p) = \mathcal{O}(\log (1/\varepsilon))\]
since $m_p \sim \log(1/\Delta p)$ and $\Delta p \sim \varepsilon$.
Such a conclusion is obviously valid for cases where the underlying operators can be diagonalised in the same basis (in this case, the arccos is not an issue as observed in the proof of Theorem \ref{thm:unitarisationBS}), which are often encountered for the differential operators with constant coefficients.

\subsection{The Fokker-Planck equation}

The Fokker-Planck equation describes the time evolution of the probability density function $f(t,x)$ of the velocity of a particle under the influence of drag forces and random forces \cite{Pav}. It has the form
\begin{equation}\label{FP0}
    \partial_t f = -\nabla \cdot ({\nabla V(x) f}) + \sigma \Delta f,
\end{equation}
where $V(x)$ is a scalar function and $\sigma>0$ is a constant. The first term on the right-hand side is called the drifted term, and the second term is the diffusion term generated by white noise. This equation has the steady state solution $f=\e^{-V(x)/\sigma}$. For convenience, we assume the periodic boundary conditions with $x =(x_1,\cdots,x_d)\in [-1,1]^d$.

\subsubsection{The conservation form}

The equation \eqref{FP0} can also be written as
\begin{equation}\label{FP}
    \partial_t f = \sigma \nabla \cdot \Big(\e^{-V/\sigma} \nabla \Big(\e^{V/\sigma}  f \Big) \Big).
\end{equation}
As done for the heat equation, one can introduce the transformation $F(t,x,p) = \e^{-p} f(t,x)$ and extend the initial data to $p<0$ to obtain
\[\begin{cases}
\partial_t F = \sigma \nabla_x \cdot \Big(\e^{-V/\sigma} \nabla_x \Big(\e^{V/\sigma}  (- F_p) \Big) \Big) \\
F(0,x,p) = \e^{-|p|} f(t,x).
\end{cases}\]
Apply the discrete Fourier transformation on both $x$ and $p$ to get
\begin{align*}
& \nabla_x \cdot \Big(\e^{-V/\sigma} \nabla_x \Big(\e^{V/\sigma}  (- F_p) \Big) \Big)
  = \sum\limits_{l=1}^d \partial_{x_l} \Big(\e^{-V/\sigma} \partial_{x_l} \Big(\e^{V/\sigma}  (- F_p) \Big) \Big) \\
& = -\i \sum\limits_{l=1}^d (-\i \partial_{x_l}) \Big(\e^{-V/\sigma} (-\i \partial_{x_l}) \Big(\e^{V/\sigma}  (- (-\i \partial_p)) F \Big) \Big) \\
& \longrightarrow \i \sum\limits_{l=1}^d  (\bb{P}_l \otimes I)(\bb{e}^{-\bb{V}/\sigma} \otimes I) (\bb{P}_l \otimes I)(\bb{e}^{\bb{V}/\sigma} \otimes I) (I^{\otimes^d} \otimes P_\mu) \bb{F} \\
& = \i \sum\limits_{l=1}^d  (\bb{P}_l \bb{e}^{-\bb{V}/\sigma} \bb{P}_l \bb{e}^{\bb{V}/\sigma} \otimes  P_\mu) \bb{F},
\end{align*}
where $\bb{P}_l$ is the same as the one given for the heat equation, and $\bb{e}^{\bb{V}} = \text{diag}(\bb{g})$ is a diagonal matrix with the diagonal vector given by
$\bb{g} = \sum_{\bb{j}} \e^{V(x_{\bb{j}})} \ket{\bb{j}}$.
It should be pointed out that $\bb{e}^{\bb{V}}$ is not the matrix exponential $\e^{\bb{V}}$, where we have used the bold symbol $\bb{e}$ to indicate the difference.

Let $\bb{A}_l = \bb{P}_l \bb{e}^{-\bb{V}/\sigma} \bb{P}_l$, which are Hermitian matrices. One has the following ODEs
\[\frac{\d }{\d t} \bb{F} = \i \sigma \sum\limits_{l=1}^d  (\bb{A}_l \bb{e}^{\bb{V}/\sigma} \otimes P_\mu) \bb{F}.\]
Defining $\tilde{\bb{F}} = (\bb{e}^{{\bb{V}}/(2\sigma)} \otimes F_p^{-1})\bb{F}$, we get a Hamiltonian system
\[\frac{\d }{\d t} \tilde{\bb{F}} = \i \bb{H}_1 \tilde{\bb{F}},\]
where
\[\bb{H}_1 =  ( \bb{B}_1 + \cdots + \bb{B}_d)\otimes D_\mu,  \]
\[\bb{B}_l = \sigma \bb{e}^{{\bb{V}}/(2\sigma)} \bb{A}_l \bb{e}^{{\bb{V}}/(2\sigma)}, \quad \bb{A}_l = \bb{P}_l \bb{e}^{-\bb{V}/\sigma} \bb{P}_l.\]

\subsubsection{The heat equation form}

Using the transformation $\psi(t,x)=\e^{V/(2\sigma)}f $ (recall the definition of  $\tilde{\bb{F}}$ in the conservation form), one easily gets $\psi$ satisfies the imaginary time Schr\"odinger or heat equation \cite{MarKVill}
\begin{equation}\label{FPS}
    \partial_t \psi = \sigma \Delta \psi - U(x)\psi
\end{equation}
where
\[U(x):= \frac{|\nabla V|^2}{4\sigma}-\frac{1}{2} \Delta V.\]
Since the equation \eqref{FPS} has the same form of the heat equation in \eqref{heateq}, one can introduce the transformation technique to get
\[\frac{\d}{\d t} \tilde{\bb{\Psi}}(t) = \i \bb{H}_2  \tilde{\bb{\Psi}}(t) ,\]
where
\[\bb{H}_2 = ( \sigma(\bb{P}_1^2 + \cdots + \bb{P}_d^2) + \bb{U} ) \otimes D_\mu  \]
is a Hermitian matrix and $\bb{U}$ is defined as $\bb{V}$.

\begin{remark}
Obviously, it is more time consuming to perform the quantum simulation of the first matrix $\bb{H}_1$. In fact, $\bb{B}_l$ are not sparse in $x_l$ direction, hence the simulation of $\bb{H}_1$ is not sparse along $x_l$ direction. However, $\bb{P}_l^2$ in $\bb{H}_2$ can be efficiently implemented in the frequency space by using the quantum Fourier transform.
\end{remark}

\subsection{The linear Boltzmann equation}

In this section we consider the linear Boltzmann equation with isotropic scattering
\cite{Chand}
\[\begin{cases}
\partial_t f + \xi \cdot \nabla_x f = \frac{1}{|\Omega|} \int_{\Omega} f(t,x,\xi') \d \xi' - f, \\
f(0,x,\xi) = f_0(x,\xi),
\end{cases}\]
where $f = f(t,x,\xi)$, $x =(x_1,\cdots,x_d)\in [-1,1]^d$,  and $\xi = (\xi_1,\cdots,\xi_{d-1})$ is a vector on the unit sphere $\mathbb{S}^{d-1}$ in $\mathbb{R}^d$. We assume the periodic boundary conditions are imposed.

\subsubsection{The Hamiltonian system}

Proceeding as in the previous sections, we introduce the warped phase transformation
\[F(t,x,\xi,p) = \e^{-p} f(t,x,\xi), \qquad p>0\]
with the initial data symmetrically extended to $p<0$, and find that $F$ solves
\[\begin{cases}
\partial_t F + \xi \cdot \nabla_x F = -\Big( \frac{1}{|\Omega|} \int_{\Omega} \partial_p F(t,x,\xi',p) \d \xi' - \partial_p F \Big), \\
F(0,x,\xi,p) = F_0(x,\xi,p):= \e^{-|p|}f_0(x,\xi).
\end{cases}\]

We use the discrete-ordinate method to discretise the integral. Let $(w_k,\xi_k)$ be the quadrature weights and points, where $\xi_k= (\xi_{k1},\cdots,\xi_{kd})$ and $1\le k \le N$. One has the semi-discrete system:
\[\partial_t F_m + \xi_m \cdot \nabla_x F_m = - \Big( \sum\limits_k w_k \partial_p F_k - \partial_p F_m \Big), \qquad m = 1,2,\cdots,N,\]
where $F_m = F_m(t,x,p) := F(t,x,\xi_m, p)$.

Let
\[\bb{F}_m = \sum\limits_{\bb{j},j_p }\bb{F}_m(t,x_{\bb{j}}, p_{j_p}) \ket{\bb{j}, j_p}, \qquad \bb{j} = (j_1,\cdots,j_d), \]
which is a column vector of $M^{d+1}$ entries.
Taking the discrete Fourier transformation both on $x$ and $p$ yields
\[\frac{\d }{\d t} \bb{F}_m + \i \sum\limits_{l=1}^d \xi_{ml} (\bb{P}_l \otimes I)\bb{F}_m = -\i (I^{\otimes^d} \otimes P_\mu) \Big( \sum\limits_{k=1}^N w_k \bb{F}_k - \bb{F}_m \Big), \qquad m = 1,\cdots, N,\]
where $\bb{P}_l = I^{\otimes^{l-1}} \otimes P_\mu \otimes I^{\otimes^{d-l}}$.
The above system can be rewritten as
\begin{equation} \label{BoltzmannF}
\frac{\d }{\d t} \bb{F} + \i \sum\limits_{l=1}^d ( \Lambda_{\xi_l} \otimes \bb{P}_l \otimes I ) \bb{F}
= -\i (W-I) \otimes I^{\otimes^d} \otimes P_\mu \bb{F},
\end{equation}
where $\bb{F} = [\bb{F}_1; \cdots; \bb{F}_N]$ with ``;" indicating the straightening of $\{\bb{F}_i\}_{i\ge 1}$ into a column vector, $\Lambda_{\xi_l} = \text{diag}(\xi_{1l},\xi_{2l},\cdots, \xi_{Nl})$, and
\[W = \Xi \Lambda_w, \quad \Xi = (\Xi_{ij})_{N\times N}, ~\Xi_{ij} \equiv 1, \quad \Lambda_w = \text{diag}(w_1,w_2,\cdots,w_N).\]

Define $\tilde{\bb{F}} = ( \Lambda_w^{1/2} \otimes I^{\otimes^d} \otimes  F_p^{-1} ) \bb{F} $. One has from \eqref{BoltzmannF} that
\begin{align*}
\frac{\d }{\d t} \tilde{\bb{F}} + \i \sum\limits_{l=1}^d ( \Lambda_w^{1/2} \Lambda_{\xi_l}\Lambda_w^{-1/2} \otimes \bb{P}_l \otimes I ) \tilde{\bb{F}}
= -\i (\Lambda_w^{1/2} \Xi \Lambda_w^{1/2}-I) \otimes I^{\otimes^d} \otimes D_\mu) \tilde{\bb{F}}.
\end{align*}
Noting that $\Lambda_w^{1/2} \Lambda_{\xi_l}\Lambda_w^{-1/2} = \Lambda_{\xi_l}$, we finally derive a Hamiltonian system
\begin{equation}\label{BoltzmannHamiltonian}
\frac{\d }{\d t} \tilde{\bb{F}} = -\i \bb{H} \tilde{\bb{F}},
\end{equation}
where
\[\bb{H} =  \sum\limits_{l=1}^d ( \Lambda_{\xi_l}\otimes \bb{P}_l \otimes I ) + (\Lambda_w^{1/2} \Xi \Lambda_w^{1/2}-I) \otimes I^{\otimes^d} \otimes D_\mu\]
is a Hermitian matrix.

\subsubsection{The quantum simulation}

For the Hamiltonian system \eqref{BoltzmannHamiltonian}, one can solve
\[\frac{\d }{\d t} \tilde{\bb{F}} = -\i \bb{H}_\xi \tilde{\bb{F}}, \qquad \bb{H}_\xi = \sum\limits_{l=1}^d ( \Lambda_{\xi_l}\otimes \bb{P}_l \otimes I )\]
for one time step, followed by solving
\[\frac{\d }{\d t} \tilde{\bb{F}} = -\i \bb{H}_w \tilde{\bb{F}}, \qquad \bb{H}_w = (\Lambda_w^{1/2} \Xi \Lambda_w^{1/2}-I) \otimes I^{\otimes^d} \otimes D_\mu\]
again for one time step. By introducing $\tilde{\bb{c}} = (I^{\otimes^d} \otimes F_x^{-1} \otimes I)\tilde{\bb{F}}$, where $F_x = \Phi^{\otimes^d}$, the system in the first step is transformed into
\[\frac{\d }{\d t} \tilde{\bb{c}} = -\i \bb{H}_\xi \tilde{\bb{c}}, \qquad \bb{H}_{D,\xi} = \sum\limits_{l=1}^d ( \Lambda_{\xi_l}\otimes \bb{D}_l^\mu \otimes I ),\]
where $\bb{H}_{D,\xi}$ is a diagonal matrix.

\begin{theorem}
The solution to the Boltzmann equation can be simulated with gate complexity given by
\[N_{\text{Gates}} = \widetilde{\mathcal{O}}(m_H N^2 /\Delta p + m_H /\Delta x ) + \mathcal{O}(T/\Delta t \cdot d m \log m ) + \mathcal{O}(m_p \log m_p),\]
where $N$ is the number of quadrature points and $m_H = dm + m_p$.
\end{theorem}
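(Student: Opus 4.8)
The plan is to bound the gate count term by term, following the operator splitting already set up for \eqref{BoltzmannHamiltonian}, and to match the four summands in the claimed complexity to four distinct ingredients: the collision simulation (the $N^2/\Delta p$ term), the transport simulation (the $1/\Delta x$ term), the per-step quantum Fourier transforms on the $x$-register (the $T/\Delta t\cdot dm\log m$ term), and the change of variables to $\tilde{\bb{F}}$ (the $m_p\log m_p$ term). Since both $\bb{H}_\xi = \sum_{l=1}^d (\Lambda_{\xi_l}\otimes\bb{P}_l\otimes I)$ and $\bb{H}_w = (\Lambda_w^{1/2}\Xi\Lambda_w^{1/2}-I)\otimes I^{\otimes^d}\otimes D_\mu$ are time-independent, a first-order Trotter step alternates between them and the accumulated evolution time under each remains $T$, so the sparse-simulation parameters are unaffected by the step count and only the conjugating Fourier transforms genuinely scale with $T/\Delta t$.

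First I would treat the transport part. As set up just before the statement, conjugating by the QFT on the $x$-register, $\tilde{\bb{c}} = (I^{\otimes^d}\otimes F_x^{-1}\otimes I)\tilde{\bb{F}}$, turns $\bb{H}_\xi$ into the diagonal matrix $\bb{H}_{D,\xi} = \sum_{l=1}^d (\Lambda_{\xi_l}\otimes\bb{D}_l^\mu\otimes I)$. Each Trotter step then requires one forward and one inverse QFT on the $x$-register, costing $\mathcal{O}(dm\log m)$ gates exactly as in the proof of Theorem \ref{thm:heat}; accumulated over $T/\Delta t$ steps this produces the term $\mathcal{O}(T/\Delta t\cdot dm\log m)$. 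The diagonal evolution $\e^{-\i\bb{H}_{D,\xi}t}$ itself is simulated by Algorithm II (Lemma \ref{lem:complexityHamiltonian}) as a $1$-sparse Hamiltonian whose max-norm is controlled by $\|\bb{D}_l^\mu\|_{\max}\sim 1/\Delta x$, giving $\tau_\xi\sim dT/\Delta x$ and hence the contribution $\widetilde{\mathcal{O}}(m_H/\Delta x)$.

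Next I would handle the collision part $\bb{H}_w$. The essential structural fact is that $\Xi$ is the all-ones matrix, so $\Lambda_w^{1/2}\Xi\Lambda_w^{1/2}$ is fully dense on the $N$-dimensional ordinate register and cannot be diagonalised by a cheap structured transform the way the transport operator is diagonalised by the QFT. I would therefore simulate $\e^{-\i\bb{H}_w t}$ directly by Algorithm II, inheriting sparsity $s(\bb{H}_w)\sim N$ from the dense scattering block (since $I^{\otimes^d}$ and $D_\mu$ are sparse) and max-norm governed by $\|D_\mu\|_{\max}\lesssim 1/\Delta p$; bounding $\tau_w = s(\bb{H}_w)\|\bb{H}_w\|_{\max}T$ yields the dominant $\widetilde{\mathcal{O}}(m_H N^2/\Delta p)$ contribution. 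Finally, the defining change of variables $\tilde{\bb{F}} = (\Lambda_w^{1/2}\otimes I^{\otimes^d}\otimes F_p^{-1})\bb{F}$ and its inverse each invoke one QFT on the $p$-register — applied only a constant number of times rather than per step — contributing $\mathcal{O}(m_p\log m_p)$, while the diagonal factor $\Lambda_w^{1/2}$ costs only $\mathcal{O}(m_H)$ gates and is absorbed. Summing the four contributions gives the stated bound.

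I expect the main obstacle to be the honest bookkeeping of the pair $(s,\|\cdot\|_{\max})$ for $\bb{H}_w$ under the symmetrisation and tensor-product structure, and in particular pinning the exact polynomial power of $N$: the density of the isotropic scattering kernel is precisely what blocks the QFT trick available for transport, so a polynomial-in-$N$ cost here is intrinsic rather than an artefact of a loose estimate. A secondary point to verify carefully is that interleaving the two evolutions through first-order splitting leaves these counts intact — which follows because the total evolution time under each of the time-independent pieces $\bb{H}_\xi$ and $\bb{H}_w$ is still $T$, so the accumulated $\tau$'s are unchanged and only the conjugating QFTs on the $x$-register scale with $T/\Delta t$.
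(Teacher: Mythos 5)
Your proposal follows essentially the same route as the paper's proof: Trotter-split into $\bb{H}_\xi$ and $\bb{H}_w$, diagonalise $\bb{H}_\xi$ by per-step QFTs on the $x$-register (giving the $\mathcal{O}(T/\Delta t\cdot dm\log m)$ term and, via Algorithm II with $s=1$, $\|\cdot\|_{\max}\lesssim 1/\Delta x$, the $\widetilde{\mathcal{O}}(m_H/\Delta x)$ term), simulate the dense scattering block $\bb{H}_w$ directly with Algorithm II, and charge the $p$-register QFT only a constant number of times for $\mathcal{O}(m_p\log m_p)$ — all matching the paper. The one loose end is exactly the one you flagged: with your stated parameters $s(\bb{H}_w)\sim N$ and $\|\bb{H}_w\|_{\max}\lesssim 1/\Delta p$, the product gives only $\widetilde{\mathcal{O}}(m_H N/\Delta p)$; the paper reaches $N^2/\Delta p$ by bounding $\|\bb{H}_w\|_{\max}\lesssim N M_p\sim N/\Delta p$ (i.e.\ putting an extra factor of $N$ into the max-norm of the symmetrised scattering block), so your estimate is in fact at least as tight as, and consistent with, the stated bound.
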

\begin{proof}
Given the initial state of $\tilde{\bb{F}}^0$, applying the inverse QFT to the $x$-register, one gets $\tilde{\bb{c}}^0$.
At each time step, one needs to consider the following procedure
\[\tilde{\bb{c}}^n
\xrightarrow {\e^{-\i \bb{H}_{D,\xi} \Delta t}} \tilde{\bb{c}}^*
\xrightarrow {I^{\otimes^d} \otimes F_x\otimes I } \tilde{\bb{F}}^*
\xrightarrow {\e^{-\i \bb{H}_w \Delta t}} \tilde{\bb{F}}^{n+1}
 \xrightarrow {I^{\otimes^d} \otimes F_x^{-1}\otimes I } \tilde{\bb{c}}^{n+1}.\]

By Lemma \ref{lem:complexityHamiltonian},$\e^{-\i \bb{H}_w \Delta t}$ can be simulated with
\[N_{\text{Gates}}(\e^{-\i \bb{H}_w \Delta t}) = \widetilde{\mathcal{O}}\Big(\Delta t m_H s(\bb{H}_w)\|\bb{H}_w\|_{\max} \Big) = \widetilde{\mathcal{O}}(\Delta t \cdot m_H N^2/\Delta p ),\]
where $s(\bb{H}_w) = N $ and $\|\bb{H}_w\|_{\max}\lesssim N M_p$. Similary, $\e^{-\i \bb{H}_{D,\xi} \Delta t}$ can be simulated with
 \[N_{\text{Gates}}(\e^{-\i \bb{H}_{D,\xi} \Delta t}) = \widetilde{\mathcal{O}}\Big(\Delta t m_H s(\bb{H}_{D,\xi})\|\bb{H}_{D,\xi}\|_{\max} \Big)
 = \widetilde{\mathcal{O}}(\Delta t m_H /\Delta x ),\]
where $s(\bb{H}_{D,\xi}) = 1$ and $\|\bb{H}_{D,\xi}\|_{\max} \le M = \mathcal{O}( 1/\Delta x)$.

It is known that the quantum Fourier transforms $F_x$  can be implemented using $ \mathcal{O}(d m \log m)$ gates. Therefore, the gate complexity required to iterate to the $n$-th step for the system \eqref{BoltzmannHamiltonian} is
\begin{align*}
N_{\text{Gates}}
& = n \Big( \widetilde{\mathcal{O}}(\Delta t \cdot m_H  N^2/\Delta p + \Delta t m_H /\Delta x ) +  \mathcal{O}(d m \log m) \Big) \\
& = \widetilde{\mathcal{O}}(m_H N^2 /\Delta p + m_H /\Delta x ) + \mathcal{O}(T/\Delta t \cdot d m \log m ).
\end{align*}
The proof is complete.
\end{proof}

We remark that $\|H\|_{\max}$ in Algorithm II (see Lemma \ref{lem:complexityHamiltonian}) is not significantly amplified by the auxiliary variable $p$. This is due to the presence of the first-order derivative with respect to $x$ in the convection term. This is very different from the Schr\"odingerisation of the linear convection equation in Subsect.~\ref{subsec:convection}, where the second-order derivative with respect to $p$ is included, which leads to $1/\Delta p^2$ as the multiplicative factor in the time complexity if Algorithm II is used.

\subsection{The Vlasov-Fokker-Planck equation} \label{subsec:VFP}

We now present an example in which the first use of the warped phase transformation fails.

Consider $f=f(t,x,\xi)>0$ satisfying the Vlasov-Fokker-Planck (VFK) equation (also called the Klein-Kramers-Chandrasekhar equation)  \cite{McKean}:
\begin{equation*}
    \partial_t f+ \xi\cdot \nabla_x f -\nabla V(x) \cdot \nabla_\xi f = \nabla_\xi \cdot (\xi f + \nabla_\xi f),
\end{equation*}
where $x, \xi \in \mathbb{R}^d$. Introducing
\[M = M(\xi) := \e^{-\frac{|\xi|^2}{2}},\]
the VFK equation can be written as
\[
\partial_t f+ \xi\cdot \nabla_x f -\nabla V(x) \cdot \nabla_\xi f = \nabla_{\xi} \cdot (M  \nabla_{\xi} (M^{-1}f)).
\]
Let $W(\xi) = |\xi|^2/2$ and $\sigma = 1$. Then $M = \e^{-W/\sigma}$ and the above equation can be rewritten as
\[
\partial_t f+ \xi\cdot \nabla_x f-\nabla V(x) \cdot \nabla_\xi f = \sigma \nabla_{\xi} \cdot (\e^{-W/\sigma} \nabla_{\xi} (\e^{W/\sigma}f)),
\]
where the right-hand side has the same form as that of Eq.~\eqref{FP}.

It seems that we can also try the conservation form and heat equation form as for the Fokker-Planck equation, however, the third term $\nabla V(x) \cdot \nabla_\xi f$ on the left-hand side makes the direct use of the warped phase transformation no longer work. Let us consider the conservation form as an example.
Introduce the transformation $F(t,x,\xi,p) = \e^{-p}f(t,x,\xi)$ with periodic extension of the initial data to get
\[
\partial_t F + \xi\cdot \nabla_x F -\nabla V(x) \cdot \nabla_\xi f = \nabla_{\xi} \cdot (\e^{-W} \nabla_{\xi} (\e^{W} (- F_p))).
\]
Applying the discrete Fourier transformation on all variables, on easily obtains
\begin{align*}
&\xi\cdot \nabla_x F
 = \sum\limits_{l=1}^d \xi_l \partial_{x_l} F  \longrightarrow \i \sum\limits_{l=1}^d ( \bb{P}_l \otimes \bb{D}_l \otimes I) (\bb{F}_x \otimes \bb{F}_\xi \otimes \bb{F}_p), \\
&\nabla V(x) \cdot \nabla_\xi F
 = \sum\limits_{l=1}^d  \partial_{x_l} V \partial_{\xi_l} F  \longrightarrow \i \sum\limits_{l=1}^d ( \bb{V}_l \otimes \bb{P}_l \otimes I) (\bb{F}_x \otimes \bb{F}_\xi \otimes \bb{F}_p),
\end{align*}
where $\bb{V}_l = \sum_{\bb{i}} \partial_{x_l} V(x_{\bb{i}}) \ket{\bb{i}}$ is a diagonal matrix,
and
\begin{align*}
&  \nabla_{\xi} \cdot (\e^{-W} \nabla_{\xi} (\e^{W}(- F_p) )) \\
&  = \sum\limits_{l=1}^d \partial_{\xi_l}(\e^{-W} \partial_{\xi_l} (\e^{W} (- F_p)))
  = \i \sum\limits_{l=1}^d (-\i \partial_{\xi_l}) (\e^{-W} (-\i \partial_{\xi_l}) (\e^{W}  (-\i \partial_p F) )) \\
& \longrightarrow \i \sum\limits_{l=1}^d (I^{\otimes^d} \otimes \bb{P}_l \otimes I)(I^{\otimes^d} \otimes \bb{e}^{-\bb{W}_\xi}\otimes I)(I^{\otimes^d} \otimes \bb{P}_l \otimes I)(I^{\otimes^d} \otimes\bb{e}^{\bb{W}_\xi}\otimes I)(I^{\otimes^d} \otimes I^{\otimes^d} \otimes P_\mu)\bb{F} \\
& = \i \sum\limits_{l=1}^d(I^{\otimes^d} \otimes \bb{P}_l \bb{e}^{- \bb{W}_\xi} \bb{P}_l \bb{e}^{\bb{W}_\xi}  \otimes  P_\mu ) \bb{F},
\end{align*}
where $\bb{e}^{- \bb{W}_\xi}$ is a diagonal matrix with the diagonal vector given by
$\sum_{\bb{j}} W(\xi_{\bb{j}}) \ket{\bb{j}}$.
Let $\bb{B}_l = \bb{P}_l \bb{e}^{-\bb{W}_\xi} \bb{P}_l$, which are Hermitian matrices. One gets the following ODEs:
\[
\frac{\d }{\d t} \bb{F}  = \i \sum\limits_{l=1}^d ( -\bb{P}_l \otimes \bb{D}_l \otimes I + \bb{V}_l \otimes \bb{P}_l \otimes I + I^{\otimes^d} \otimes \bb{B}_l\bb{e}^{\bb{W}_\xi} \otimes P_\mu) \bb{F}.
\]

For this system, it doesn't work by introducing the new variables $\tilde{\bb{F}} = I^{\otimes^d} \otimes \bb{e}^{\bb{W}_\xi/2} \otimes F_p^{-1}$ since the first term on the right-hand side will change to $\bb{P}_l \otimes \bb{e}^{\bb{W}_\xi/2}\bb{D}_l\bb{e}^{-\bb{W}_\xi/2} \otimes I$, which are not Hermitian matrices.

To resolve this problem, one can first derive an ODE system resulting from the discretisation of $x$ and $\xi$ variables, and then apply the generalised approach in Subsect.~\ref{subsect:generalisation}, with the details omitted.

\subsection{The Liouville representation  for nonlinear ODEs}

Consider the following nonlinear ODEs
\begin{equation}\label{ODEsFx}
\frac{\d q(t)}{\d t} = F(q(t)), \quad q(0) = q_0 , \quad q = [q_1,\cdots,q_d]^T.
\end{equation}
For $x = (x_1,\cdots,x_d)$, let $\delta(x) = \Pi_{i=1}^d \delta(x_i)$  be the Dirac delta distribution. The Liouville equation corresponding to \eqref{ODEsFx} can be derived by considering
 a function $\rho(t, x): \mathbb{R}^+ \times \mathbb{R}^d \to \mathbb{R}$, defined by
\begin{equation*}%\label{rhodelta}
\rho(t,x) = \delta( x-q(t)),
\end{equation*}
which represents the probability distribution in space $x$ that corresponds to the solution $x =q$. By the properties of the delta function, one obtains the solution of \eqref{ODEsFx} by taking the moment:
\begin{equation}\label{xrho}
q(t) = \int x\delta( x-q(t)) \d x = \int x\rho(t,x) \d x.
\end{equation}
To this end, we can characterize the dynamics of $\rho(t,x)$ and find the solution $q(t)$ via \eqref{xrho}.

One can check that $\rho$ satisfies, in the weak sense, the linear $(d + 1)$-dimensional PDE
\begin{equation*}
\begin{cases}
\partial_t \rho(t,x) + \nabla_x \cdot [F(x) \rho(t,x)  ] = 0,\\
\rho_0(x):=\rho(0,x) = \delta(x-q_0).
\end{cases}
\end{equation*}
Since the initial data involves a delta function, we consider the following problem with the smoothed initial data \cite{JinLiuYu2022nonlinear}
\begin{equation*}
\begin{cases}
\partial_ t u(t,x) + \nabla \cdot [F(x) u(t,x)  ] = 0,\\
u(0,x) = u_0(x):= \delta_\omega(x-q_0).
\end{cases}
\end{equation*}

Let $u_1 = F_i(x) u(t,x)$. According to the notations for the Fourier spectral method, one has
\[ -\i \partial_{x_i} u_1 \longrightarrow  \hat{P}_i^{\rm d}\bb{u}_1 = \bb{P}_i \bb{u}_1 = \bb{P}_i \Lambda_{F_i} \bb{u},\]
where $\Lambda_{F_i} = \text{diag}(\bb{F}_i)$ is a diagonal matrix and $\bb{F}_i= \sum_{\bb{j}} F_i(x_{\bb{j}}) \ket{\bb{j}}$. The resulting system of ordinary differential equations is
\begin{equation}\label{SpectralSystemLiouvilleRep}
\begin{cases}
\frac{\d }{\d t} \bb{u}(t) = -\i A \bb{u}(t),\\
\bb{u}(0) = \bb{u}^0 = ( u_0(x_{\bb{j}}) ),
\end{cases}
\end{equation}
where $A = \sum_{i=1}^d A_i$ with $A_i = \bb{P}_i \Lambda_{F_i}$. In Ref.~\cite{JinLiuYu2022nonlinear} we proposed a quantum simulation method for the above problem by using the time splitting approach. However, the simulation protocol there is different from the traditional time-marching Hamiltonian simulation since non-unitary procedures are involved at each time step, leading to exponential increase of the cost. By the generalised framework in Subsect.~\ref{subsect:generalisation}, we are ready to recover a true Hamiltonian simulation.

Let us still consider the splitting in \cite{JinLiuYu2022nonlinear}. In fact, there is no need to apply the splitting since each $\bb{A}_i$ has the similar structure. Here we just indicate that our protocol also works for the splitting. The evolution of \eqref{SpectralSystemLiouvilleRep} can be written as
\[\bb{w}(t+\Delta t) = \e^{-\i (A_1 + \cdots + A_d) \Delta t } \bb{w}(t),\]
in which the evolutionary operator can be approximated by the first-order product formula
\begin{equation}\label{Wdelta}
U_{\Delta t} = \e^{-\i A_d \Delta t}\cdots \e^{-\i A_1 \Delta t} .
\end{equation}
Then the problem is reduced to the simulation of each $A_j$, where $A_j$ is not necessarily symmetric. Consider the decomposition $-\i A_j  = H_1^j +  \i H_2^j$, where $H_1^j$  and $H_2^j$ are Hermitian matrices. Following the discussion in Subsect.~\ref{subsect:generalisation}, one can construct a Hamiltonian system associated with the evolutionary operator $\e^{-\i A_j \Delta t} $:
\[ \frac{\d}{\d t} \bb{w}(t) = \i ( -H_1^j \otimes P_\mu  +  H_2^j \otimes I ) \bb{w}(t), \]
where $\bb{w}$ encodes all the grid values of the variable $\bb{v}$ defined by \eqref{u2v} (note that $\bb{v}$ corresponds to the original system \eqref{SpectralSystemLiouvilleRep}).
In contrast to the change of variables in \cite{JinLiuYu2022nonlinear}, the transformations here are the same in every time step. We therefore restore the time-marching Hamiltonian simulation for the new variables.

\section{Summary}

We provided technical details for a new method, referred to as Schr\"odingerisation, for solving general linear PDEs using quantum simulation. The idea is to introduce a simple warped phase transformation that can translate the PDEs into a `Schr\"odingerised' or Hamiltonian system, without employing more sophisticated methods.  This enables quantum simulation for these PDEs (and also ODEs).

This approach was applied to several typical examples, including the heat, convection, Fokker-Planck, linear Boltzmann and Black-Scholes equations. It can be extended to Schr\"odingerise general linear partial differential equations, including the Vlasov-Fokker-Planck equation and the Liouville representation equation for nonlinear ordinary differential equations. It also has the potential to find a variety of applications in time-dependent or independent boundary value problems.

\section*{Acknowledgements}
SJ was partially supported by the NSFC grant No.~12031013, the Shanghai Municipal Science and Technology Major Project (2021SHZDZX0102), and the Innovation Program of Shanghai Municipal Education Commission (No. 2021-01-07-00-02-E00087).  NL acknowledges funding from the Science and Technology Program of Shanghai, China (21JC1402900). YY was partially supported by China Postdoctoral Science Foundation (no. 2022M712080).

%\addcontentsline{toc}{section}{References}
\bibliographystyle{plain} %plain, unsrt, alpha
\bibliography{QnovelSim}
%\bibliography{Nonlinear}

\end{document}